\documentclass[12pt, titlepage]{article}
\pdfoutput=1

\makeatletter
\def\ps@headings{%
\def\@oddhead{\mbox{}\scriptsize\rightmark \hfil \thepage}%
\def\@evenhead{\scriptsize\thepage \hfil \leftmark\mbox{}}%
\def\@oddfoot{}%
\def\@evenfoot{}}
\makeatother
\pagestyle{headings}

\usepackage{amsmath, amssymb,amsthm, soul}
\usepackage{graphicx}
\usepackage{dsfont}
\usepackage[usenames,dvipsnames]{xcolor}
\usepackage{ifsym, wasysym}
\usepackage{verbatim}
\usepackage[tight,footnotesize]{subfigure}
\usepackage{cite}
\usepackage{url}
\usepackage{comment}
\usepackage{todonotes}
%
\hyphenation{op-tical net-works semi-conduc-tor}

\excludecomment{versiona}
\includecomment{versionb}

\newtheorem{theorem}{Theorem}

\newtheorem{lemma}{Lemma}
\newtheorem{cor}{Corollary}

\def\eqd{{\,{\buildrel d \over =}\,}}
\def\deq{{\,{\buildrel \bigtriangleup \over =}\,}}

\def\conv{\otimes}
\def\deconv{\oslash}
\def\eps{\varepsilon}
\def\P{{Pr}}  
\def\E{{E}}  
\def\minplus{{$(\min, +)\,$}}
\def\S{{\cal S}}
\def\A{{\cal A}}
\def\D{{\cal D}}

\def\M{{\mathcal M}}

\def\X{{\mathcal X}}

\def\mx{{$(\min, \times)$}}

\graphicspath{{./},{./figs/}}


\begin{document}

\title{Performance Analysis of Reliable Video Streaming with Strict Playout Deadline in Multi-Hop Wireless Networks }
\author{Hussein Al-Zubaidy, Viktoria Fodor, Gy\"{o}rgy D\'{a}n, Markus Flierl\\ 
\\ School of Electrical Engineering, KTH Royal Institute \\ of Technology,  Stockholm, Sweden. \\
\\ E-mail:  \{hzubaidy, vjfodor, gyuri, mflierl\}@kth.se.
        }%

\maketitle

\begin{abstract}
Motivated by emerging vision-based intelligent services, we consider the problem of rate adaptation for high quality and low delay visual information delivery over wireless networks using scalable video coding. Rate adaptation in this setting is inherently challenging due to the interplay between the variability of the wireless channels, the queuing at the network nodes and the frame-based decoding and playback of the video content at the receiver at very short time scales. To address the problem, we propose a low-complexity, model-based rate adaptation algorithm for scalable video streaming systems, building on a novel performance model based on stochastic network calculus. We validate the model using extensive simulations. We show that it allows fast, near optimal rate adaptation for fixed transmission paths, as well as cross-layer optimized routing and video rate adaptation in mesh networks, with less than $10$\% quality degradation compared to the best achievable performance.


\end{abstract}



\section{Introduction}

Low cost cameras that are able to capture high quality images, combined with increasing wireless transmission rates, and advances in video coding and visual processing are
enabling a variety of novel, visual information based intelligent services. The services include vision-controlled robotics~\cite{5gppp-1}, automated driving applications~\cite{Gerla,5gppp-1}, and telematic surgery~\cite{Ghodoussi}, and are often safety critical. Their requirements differ significantly from the ones of traditional video content distribution: they require very low latency, high reliability and good video quality for human inspection or for automated visual processing. As an example, use case specifications for eHealth, future factories and automotive~\cite{5gppp-1,5gppp-2}
require tens of milliseconds of network, and hundreds of milliseconds of application level delay limits, with a 99.99\% reliability.
Since the cameras are often hard to access or are mobile, the use of wireless data transmission is inevitable, likely over multiple wireless hops. Multiple wireless hops facilitate the support for multicast or convergecast of the captured streams of images, facilitate handling node mobility, may help to cope with the hostile wireless environment, and could allow low power transmission for battery driven nodes~\cite{Baroffio,Movassaghi,Nishiyama}.

Scalable video coding (SVC) is an important enabling technology to achieve reliable video streaming in dynamic networking environments~\cite{SchwarzTCSVT07,BoyceTCSVT16,RufenachtTIP16}. For scalable video coding,  each video frame, or group of pictures (GOP), is encoded into multiple layers, and rate adaptation, that is, the selection of an appropriate number of layers to be transmitted, makes it possible to adjust the transmission rate, and thus the video quality to the bandwidth available for the transmission. When the bandwidth of the network path deteriorates, less layers should be transmitted so as to reduce the queuing delays at the intermediate nodes, and to safeguard timely delivery of the layers that are transmitted. At the same time, the transmission of too few layers is detrimental as well, as it leads to high distortion at the receiver.

The traditional approach to rate adaptation with SVC is to follow the long term changes of the transmission rate, either based on the buffer occupancy at the receiver or by estimating the transmission rate~\cite{Spiteri16, DeCiccoPV13, LiJSAC14, YinSIG15}. Long term rate adaptation is then combined with buffering at the receiver, so as to even out the short term bandwidth variations.
Nonetheless, these rate adaptation solutions cannot be applied for wireless network applications with strict delay limits, as the short term variability of the wireless channels can not be compensated with in-network and receiver buffering. The rate adaptation problem under strict delay constraints is thus particularly challenging and calls for a novel solution approach.

In this paper we attack the problem by proposing model-based rate adaptation for low-latency video streaming in wireless networks, { by utilizing a stochastic network calculus approach. A significant advantage of this approach is that it provides quantifiable measures of end-to-end quality of service (QoS) as a function of  link quality. These measures can then be translated into useful quality of experience measures (QoE) for the video playout. The QoE measures can then be used for rate adaptation and performance optimization, as well as for the evaluation of new coding schemes.}

We utilize the wireless extensions of stochastic network calculus to be able to capture the transmission of the bit stream over the time varying wireless links and the queuing delays at the intermediate nodes~\cite{Zubaidy_INFOCOM13,AlZubaidyTON}. We extend previous results to take the playout process into account, combining two different time scales, the bit-stream-based transmission and queuing and the video-frame-based decoding and playout.
We validate the model through extensive simulations, and demonstrate the efficiency of the model-based source rate adaptation and its extension to cross-layer optimized delay sensitive routing.

The rest of the paper is organized as follows. Section~\ref{sec:RelatedWork} discusses recent results on video playout optimization and stochastic network calculus.  Section~\ref{sec:Preliminaries}  presents background regarding the methodology used. Section~\ref{sec:SysDes} describes the considered system.  Section~\ref{sec:ModelDes} presents the model and provides a lower bound on the playout rate under reliability constraint. The model is validated in Section~\ref{sec:Numerical}, and in Section~\ref{sec:Adaptive} we evaluate the efficiency of the model-based rate adaptation, including also transmission path optimization. Section~\ref{sec:conclusions} concludes the paper.

\section{Related Work}\label{sec:RelatedWork}




{
As the importance of SVC is widely recognized, scalable extensions of video coding standards are available for H.264/AVC \cite{SchwarzTCSVT07}, as well as SHVC for H.265/HEVC \cite{BoyceTCSVT16}, and new, highly scalable solutions are subject to current research \cite{RufenachtTIP16}.
The objective of SVC is to provide temporal, spatial, and quality scalability by encoding the video stream into multiple layers, a base layer and several enhancement layers, using interlayer processing.}
As a result, an enhancement layer can be decoded if all previous layers are fully received. The layered structure facilitates a trade-off
between video quality (i.e., distortion) and required bandwidth (i.e., rate).
This property becomes handy for transmission over wireless and mobile networks \cite{Nightingale13}, where the underlying link quality is subject to the channel variability
\cite{SchierlCSVT07,LinCL12,ChenICNSC14,ChenTCSVT15}. The same property makes SVC attractive for delay-limited applications, since all layers received completely before the playout deadline can be utilized for the decoding.
{ Encoding with a larger number of layers increases the potential of more efficient rate adaptation. However, the actual SVC implementation used may limit this potential due to complexity and/or efficiency constraints. Nevertheless, the rapid technological advances may soon render such limitations obsolete. It is therefore important to quantify the performance gains when using a high number of layers in  various application domains.}

Proposed rate adaptation methods for SVC are based on buffer content \cite{Spiteri16}, transmission rate estimation \cite{DeCiccoPV13,LiJSAC14}, or both \cite{YinSIG15}, with the advantage that detailed modeling of the network performance is not required. Low delay applications however can not build on buffer-content-based models. Results presented in the literature consider tens of seconds of playout delays. Similarly for low latency requirements, rate adaptation based on average transmission rate   would be overly optimistic; it would result in queuing delays at the network nodes and late arrivals at the playout buffer. Therefore, in this paper we propose rate adaptation based on network performance modeling   for low latency wireless applications.

Performance modeling of adaptive video streaming in wireless networks has mostly been considered for a single wireless link. In~\cite{LinCL12} the effect of an unreliable wireless channel is modelled by an i.i.d packet loss process, and the video coding rate and the packet size are optimized under retransmission-based error correction.
In~\cite{ChenTCSVT15} and~\cite{YangTMM11} adaptive media playout and adaptive layered coding is addressed respectively. Both papers define a queuing model on a video frame level, assuming that the wireless channel results in a Poisson frame arrival process at the receiving terminal, a simplification that may be reasonable if the buffering at the receiver side is significant, and therefore packet level delays do not need to be taken into account.

 Modeling of video streaming based on network calculus  is presented in~\cite{RizkNW15} for the purpose of resource allocation in cellular networks, again, considering a frame level model. Modeling of video transmission over two wireless links is presented in~\cite{SongGLOB11}. This work considers the video transmission as a bitstream, but even with this simplifying assumption the results reflect that modeling based on traditional queuing theory  quickly becomes  intractable as the number of links increases. In \cite{WuMNA05} a tractable model is derived for the delay violation probability for fluid transmission over multihop wireless links, following the effective capacity concept. This approach however does not lend itself to frame level modeling.

In this paper we propose model-based rate adaptation utilizing network calculus.
Network calculus characterizes the departure process and the network backlog over multihop paths. Together with recent advances on modeling wireless links, this motivates our approach.

Stochastic network calculus has been extended to capture the randomly varying channel capacity of wireless links, following different methods
~\cite{SigmetricsCiucu11,Fidler-Fading,Mahmood_Rizk_Jiang,Verticale:2009, Zubaidy_INFOCOM13,FidlerMGF}.
Most of the existing work
builds on an abstracted finite-state Markov channel (FSMC)  model of the underlying fading
channel, e.g.,~\cite{Fidler-Fading,Mahmood_Rizk_Jiang} or uses moment generating function based network calculus~\cite{FidlerMGF}.
However, the complexity of the resulting models limits the applicability of these approaches in multi--hop  wireless network analysis with more than a few state FSMC model and more than two hops.
%
%
In this work, we follow the approach proposed by Al-Zubaidy et al~\cite{Zubaidy_INFOCOM13}, where a wireless network calculus based on the \mx~dioid algebra was developed.
The main premise for this approach is that the channel capacity, and hence the offered service of fading channels is related to the instantaneous received SNR through the logarithmic function as expressed by the Shannon capacity,   $C(\gamma) = \log(1+\gamma)$. Hence, an equivalent representation of the channel capacity in an isomorphic transform domain, obtained using the exponential function, would be $e^{C(\gamma)} = 1+\gamma $. This simplifies  the otherwise cumbersome computations of the  end-to-end performance metrics.


\section{Network Calculus for Wireless Networks} \label{sec:Preliminaries}

Network calculus has been developed to provide an efficient analytic tool for evaluating the quality of service provided by networks with multi-hop transmission path, including the effect of correlated buffering at the network nodes. In network calculus, the generated network traffic at node $k$ in time interval $[\tau,t)$ is characterized by the cumulative arrivals, that is, the real--valued non--negative bivariate process $A_k(\tau,t)$, while the transmission capabilities of node $k$ are described by the process of cumulative services  $S_k(\tau,t)$. The  resulting departure process, $D_k(\tau,t)$,  characterizes the cumulative traffic leaving node $k$.
These processes are non-decreasing in $t$ with $A_k(t,t)= S_k(t,t) = D_k(t,t)=0$ and $A_k(0,t)\ge D_k(0,t)$ for all $t$. The objective of stochastic network calculus is to derive the departure processes for complex network topologies, and based on that express the network performance, typically in terms of probabilistic bounds on the end-to-end delay  $W(t)$, and the backlog $B(t)=A(0,t)-D(0,t)$, characterizing the amount of traffic delayed in the transmission queues of the network.
Network calculus can be used to analyze networks with either packetized or fluid flow traffic and for discrete or continuous time scale. In this work, we consider fluid flow traffic and discrete (slotted) time.

Introduced in~\cite{Zubaidy_INFOCOM13}, the
\mx~network calculus  transforms the problem into an alternative domain, called SNR domain, where the SNR service process ($\S_i$) is obtained by taking the exponent of the original service process\footnote{We use the calligraphic upper--case letters to represent traffic and service processes in the SNR domain and to distinguish them from their  bit domain (where traffic and service are measured in bits) counterparts.}, i.e., $\S_i = e^{S_i}$. Therefore, we refer to a network element $i$ as \textit{dynamic SNR server}, if it offers a service  $\S_i$ that satisfies the  input--output inequality  \cite{CSChang}, $ \D(0,t) \ge   \A \conv \S_{i} (0,t)$,
 where the \mx~convolution and deconvolution are respectively defined for any two SNR processes $\X_1(\tau,t)$ and  $\X_2(\tau,t)$ as
\begin{equation*} 
\X_1 \conv \X_2 (\tau,t) \deq
\inf_{\tau\leq u \le t} \big \{ \X_1 (\tau,u) \cdot \X_2 (u,t) \big \} \, ,
\end{equation*}
\begin{align*} 
\X_1 \deconv \X_2 (\tau,t)
\deq \sup_{ u \leq \tau} \Big \{ \frac{\X_1
(u, t)}{\X_2(u,\tau)} \Big \} \,. \hspace{+15mm} &
\end{align*}



{ {The key result of network calculus is the possibility to substitute the sequence of service processes on a multi-hop transmission path with a single network service process, $S_{\rm net}$, by concatenating the service processes for all nodes along a path \cite{Jiang-Book}. In the SNR domain}}
 \begin{equation}\label{eq:concatenationN}
\S_{\rm net}(\tau,t) =   \S_1  \conv \S_2  \conv \cdots \conv \S_N (\tau,t) \, .
\end{equation}
In addition, network performance bounds, e.g., end-to-end delay and backlog, can be obtained in terms of the \mx~deconvolution of the SNR arrival and service processes \cite{Zubaidy_INFOCOM13}.

The computation of the \mx~convolution and deconvolution operations are  not straight forward as it involves the evaluation of products and quotients of random processes. Thus, an exact solution for  (\ref{eq:concatenationN}) may not be feasible. Instead, we may use yet another transform, the Mellin transform, to find bounds on these two operations.
The Mellin transform, see \cite{Davies}, is defined for a   nonnegative random variable $Z$ as  $\M_Z(s)= \E[Z^{s-1}]$, for any complex valued~$s$ given that the expectation exists. Then, the following holds \cite{Zubaidy_INFOCOM13}:

\begin{lemma}\label{lem:MTConvDecon}
Let $\S_1(\tau,t)$ and
$\S_2(\tau,t)$ be two independent SNR service processes.
The Mellin transform of $\S_1 \conv\S_2(\tau,t)$, for all $s<1$, is bounded by
 \begin{equation} \label{eq:convMT}
 \M_{ \S_1 \conv  \S_2} (s, \tau, t ) \le \sum_{u=\tau}^{ t}
\M_{ \S_1} (s, \tau,u ) \cdot \M_{\S_2} (s, u, t )\, .
\end{equation}
 The Mellin transform of $\S_1 \deconv\S_2(\tau,t)$, for $s>1$, is given by
  \begin{equation} \label{eq:deconvMT}
 \M_{ \S_1 \deconv  \S_2} (s, \tau, t ) \le \sum_{u=0}^{\tau} \M_{ \S_1} (s, u,t ) \cdot \M_{\S_2} (2-s, u, \tau )\, .
\end{equation}
\end{lemma}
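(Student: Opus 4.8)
The plan is to work directly from the definitions of the \mx~convolution and deconvolution and of the Mellin transform $\M_Z(s)=\E[Z^{s-1}]$, using two elementary facts: (i) the map $x\mapsto x^{s-1}$ on $(0,\infty)$ is decreasing when $s<1$ and increasing when $s>1$, and (ii) because the time scale is discrete, every infimum/supremum over $u$ is over finitely many nonnegative terms and is therefore dominated by the corresponding sum. Independence of $\S_1$ and $\S_2$ is then used to factor the resulting expectations.

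For the convolution bound (\ref{eq:convMT}), I would start from
\[
\M_{\S_1\conv\S_2}(s,\tau,t)=\E\!\left[\Big(\inf_{\tau\le u\le t}\S_1(\tau,u)\,\S_2(u,t)\Big)^{s-1}\right].
\]
Since $s<1$, the exponent $s-1$ is negative, so $x\mapsto x^{s-1}$ is decreasing and the inner infimum becomes an outer supremum,
\[
\Big(\inf_{\tau\le u\le t}\S_1(\tau,u)\,\S_2(u,t)\Big)^{s-1}=\sup_{\tau\le u\le t}\big(\S_1(\tau,u)\,\S_2(u,t)\big)^{s-1}\le\sum_{u=\tau}^{t}\big(\S_1(\tau,u)\,\S_2(u,t)\big)^{s-1}.
\]
Taking expectations and factoring by independence, $\E\big[(\S_1(\tau,u))^{s-1}(\S_2(u,t))^{s-1}\big]=\M_{\S_1}(s,\tau,u)\,\M_{\S_2}(s,u,t)$, which yields (\ref{eq:convMT}).

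The deconvolution bound (\ref{eq:deconvMT}) follows the same template with two sign changes. Here
\[
\M_{\S_1\deconv\S_2}(s,\tau,t)=\E\!\left[\Big(\sup_{0\le u\le\tau}\frac{\S_1(u,t)}{\S_2(u,\tau)}\Big)^{s-1}\right],
\]
and since $s>1$ the map $x\mapsto x^{s-1}$ is increasing, so the outer power passes through the supremum unchanged; bounding the finite supremum over $u\in\{0,\dots,\tau\}$ by the sum, taking expectations, and factoring by independence produces the terms $\E\big[(\S_1(u,t))^{s-1}\big]\cdot\E\big[(\S_2(u,\tau))^{-(s-1)}\big]$. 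The last small step is to recognize the second factor as a Mellin transform evaluated at the reflected argument: $\E\big[(\S_2(u,\tau))^{-(s-1)}\big]=\E\big[(\S_2(u,\tau))^{(2-s)-1}\big]=\M_{\S_2}(2-s,u,\tau)$, which is exactly the $2-s$ in (\ref{eq:deconvMT}).

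I do not expect a substantial obstacle; the argument is essentially bookkeeping. The two things to get right are the direction of the monotonicity of $x\mapsto x^{s-1}$ in each case --- this is precisely what forces $s<1$ for the convolution and $s>1$ for the deconvolution --- and the implicit regularity requirement that the Mellin transforms involved be finite, so that the expectations and the interchange with the finite sums are all legitimate; this is inherited from the standing assumptions on the SNR processes.
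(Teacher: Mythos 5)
Your argument is correct and is exactly the standard proof of this lemma (the paper itself defers to \cite{Zubaidy_INFOCOM13}, where the same route is taken, and the identical technique reappears in the paper's proof of Theorem~\ref{thm:LowerBound}): reverse the $\inf$/$\sup$ using the monotonicity of $x\mapsto x^{s-1}$, dominate the finite supremum of nonnegative terms by their sum, and factor the expectation by independence, with the exponent reflection $-(s-1)=(2-s)-1$ producing the $\M_{\S_2}(2-s,\cdot,\cdot)$ term. No gaps.
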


Lemma \ref{lem:MTConvDecon} above suggests that the Mellin transform of the \mx~convolution/deconvolution of two independent  processes is bounded by a function of their Mellin transforms. { {In the case of wireless networks, the independence follows from the assumption on independent fading on the consecutive wireless links.}} Consequently, network performance bounds can be obtained in terms of the Mellin transforms of the SNR arrival and service processes of that network.

\section{System Model and Problem Formulation}\label{sec:SysDes}
In this section we describe our model of the wireless network and of video streaming,
formulate the rate adaptation { and routing problem, and
provide the corresponding arrival and service models.
  
\subsection{Wireless network model}
We consider a time slotted multi-hop wireless network with a time slot duration of $\Delta t$. We use $t$ to refer to a time slot.
For each wireless link we consider a block fading channel~\cite{McEliece} with Rayleigh fading distribution, with coherence time larger than $\Delta t$. As our focus is not on channel coding, we assume that a channel coding scheme is available at each node, such that each channel provides a service that is equivalent to its instantaneous Shannon channel capacity, $C(\gamma_{k,t})= W\log_2(1+\gamma_{k,t})$ bits/s, where $W$ is the channel bandwidth,
$\gamma_{k,t}$ is the instantaneous SNR at the receiver of channel $k$ at time slot $t$, and we consider that $\gamma_{k,t} \eqd \gamma_{k}, \forall t$, where $\eqd $ denotes equal in distribution, with average $\bar{\gamma}_{k}$. We allow the average SNR $\bar{\gamma}_{k}$ to change over time, but we make the reasonable assumption that it is known. Feedback of the channel state information (CSI) over a single link is implemented or can be implemented in modern networks~\cite{Halperin,LTE-book}, while the CSI or estimated SNR values can be collected in a mesh network with the help of the routing protocol~\cite{Accettura}.


We consider that video has to be streamed between two nodes in the wireless network, as shown in Fig.~\ref{fig:model}.
We refer to a sequence of links from the sender to the receiver node as a transmission path, and denote it by $\mathcal{P}$. Furthermore, we denote by $N$ the length of the path.
There may be multiple paths between the sender and the receiver.
We assume that buffers at intermediate nodes are locally FIFO, i.e., frames and their contents are served according to the order of their arrival.
Furthermore, no dropping or duplication of contents is allowed at these nodes.

\subsection{Scalable video streaming}

%
%

{  The video is captured at a rate of $n$ frames per second. Depending on the considered coding scheme, a frame can represent a single image, or a group of pictures (GOP). The captured frames are fed directly to the SVC encoder that generates $L \leq L_{\max}$ layers.
  We consider that each layer has a size of $m$ bits and a header of $h$ bits, resulting in a constant rate traffic of $R_E= nr =(m+ h)nL$ bits per second,
  where $r$ is the frame size in bits, and is determined by the number of transmitted layers $L$.
  The assumption of equal size layers and constant rate traffic is desirable for ease of presentation, but the proposed methodology can handle any type of traffic,
  including variable rate video, as long as its Mellin transform exists and is attainable.}

  The coded video frames are transmitted over a wireless network of $N$ transmission links. Once transmitted over the wireless network, received bits are stored in a playout buffer. Frames are decoded and played out regularly with $T_f=1/n$ time intervals and a fixed playout delay $T_D$, that is, a frame $i$ that is generated at time $\tau_i$ is played out after a fixed delay $T_D$ at time $\tau_i+T_D$. According to the layered coding only the completely received layers are used for decoding. Due to the variability of the wireless channels, the number of layers of a frame $i$ that are received within a deadline $T_D$ is random, leading to a varying per frame playout bitrate $R_D$ at the decoder, and consequently varying distortion.}



\begin{figure}
\centering
\includegraphics [width=3.8in]{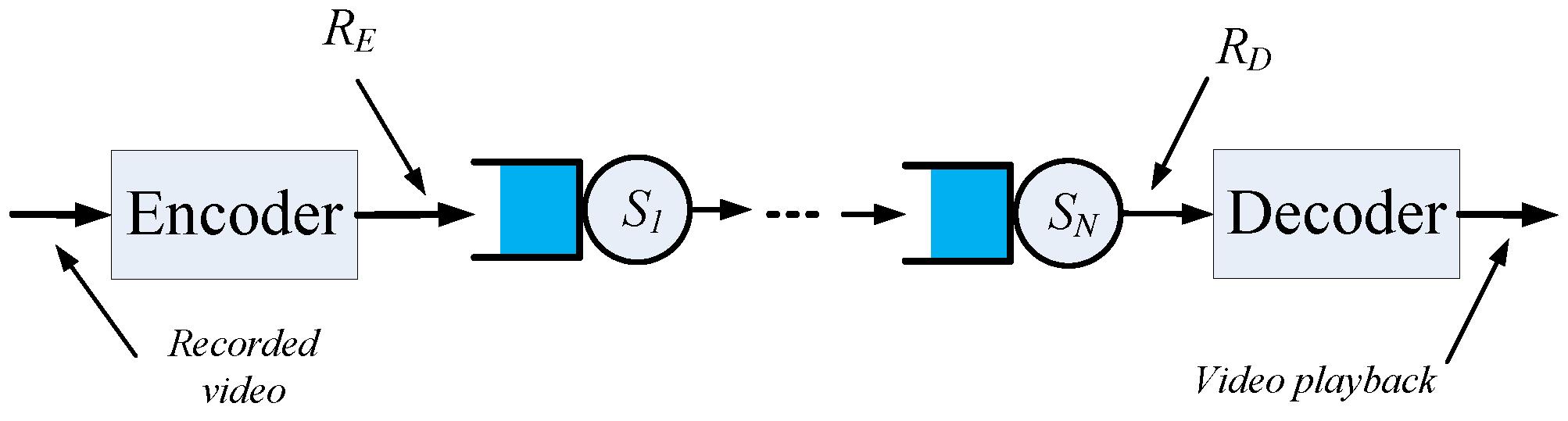}
\caption{Video transmission over multi-hop wireless network.}
\label{fig:model}
\end{figure}

\subsection{Model-based rate adaptation and routing problem}
Given the system model presented above, 
the objective of the model-based rate adaptation { and routing} problem is to select the optimal number of transmitted layers $L^*$
and the optimal transmission path $\mathcal{P}^*$ that together maximize the lower bound $r_D^\varepsilon$ of the playout rate under a reliability constraint $\varepsilon$:
\begin{eqnarray}
\max_{({L},{\mathcal{P}})} r_D^\varepsilon  \label{eq::objective}\\
\textrm{s.t.} \;\;\;\;\;\;\;\;\;\quad\quad\quad\quad\quad\quad && \nonumber\\
\P(R_D < r_D^\varepsilon) &\le& \varepsilon  \label{eq::constraint}
\end{eqnarray}

Due to the variability of the wireless channels and the queuing at the intermediate nodes, there is no tractable analytic expression
for the distribution of $R_D$. Therefore, in the following we provide a bound on the tail distribution of $R_D$, as a function of
the frame size, the average SNR values and the path length $N$. We then show how to use it for solving the model-based rate adaptation and routing problem.

\subsection{Arrival and Service  Models}

Recall that $R_E$ is the bitrate of the coded video,
and is thus the arrival rate to the wireless network. We can then express the cumulative arrival process as
\begin{equation}\label{eq:arrival}
A(\tau,t) = R_E (t- \tau) = (m+ h)nL(t- \tau) \, ,
\end{equation}
and the SNR arrival process   $\A$ is given by
\begin{equation}\label{eq:SNR-Arrival}
\A(\tau,t) = e^{ R_E (t- \tau)} = e^{ (m+ h)nL(t- \tau) } \,   .
\end{equation}
Hence, the Mellin transform of the arrival process can be expressed as
\begin{equation}\label{eq:MA}
\M_{\A}(s,\tau,t) =  e^{(s-1) (m+ h)nL(t-\tau) } \,   .
\end{equation}


Similarily, we can define the cumulative service process of a fading channel with 
SNR $\gamma_{k,u}$ is
\begin{equation}\label{eq:service}
S(\tau,t)= W \sum_{u=\tau}^{t-1} \log (1+ \gamma_{k,u}) \, ,
\end{equation}

Its SNR domain counterpart is given by the log-free form
\begin{equation}\label{eq:SNR_service1}
\S(\tau,t) = \prod_{u=\tau}^{t-1} (1+\gamma_{k,u})^{ W}\, .
\end{equation}
The Mellin transform of $\S$ depends on the distribution of $\gamma_{k,u}$, i.e., the fading distribution. In Section \ref{sec:Rayleigh}, we will derive the Mellin transform of the service process for Rayleigh channels.

\subsection{Service Model with Interfering Flows}\label{sec:Interf}
 The service model presented above for a single flow can be extended to capacity sharing between flows. Let us denote by $\A_o$ the SNR arrival process of the tagged (through) flow, and by $\A_c$ that of the other (cross) flows. We can then describe the service offered to the through flow by the leftover service process, which can be characterized as shown in Lemma \ref{lem:leftover} taken from \cite{AlZubaidyTON}.

\begin{lemma} \label{lem:leftover}
Consider a network with a through flow $\A_o$ and cross traffic flow $\A_c$. Assume that the network provides a dynamic SNR server to the aggregate of the two flows, with service process	  $\mathcal S(\tau,t)$ then
\begin{align*}
\mathcal S_o(\tau,t)
= \max \left \{1, \frac{\mathcal S(\tau, t)}{\mathcal A_c(\tau,t)} \right \}
\end{align*}
is a dynamic SNR server satisfying for all $t \geq 0$  that
\begin{align*}
\mathcal D_o(0,t)\ge \mathcal A_o\otimes \mathcal S_o(0,t)
\end{align*}
\end{lemma}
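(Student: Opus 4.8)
The plan is to prove this residual (leftover) service bound directly in the SNR domain, reconstructing the standard residual-service argument. Three ingredients from the system model are used. First, causality without loss: $\D_o(0,t)\le\A_o(0,t)$ and $\D_c(0,t)\le\A_c(0,t)$. Second, additivity of the cumulative through and cross flows in the bit domain, which in the SNR domain reads $\A=\A_o\A_c$ and $\D=\D_o\D_c$. Third, and crucially, the element serving the aggregate is work--conserving --- it transmits at the instantaneous Shannon rate $C(\gamma_{k,u})$ whenever it holds data --- so $\S$ is not merely a dynamic server but a \emph{strict} one: over every interval $(v,t]$ in which the aggregate is backlogged, $\D(v,t)\ge\S(v,t)$.

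First I would fix $t\ge 0$ and set $v=\max\{s:0\le s\le t,\ \D(0,s)=\A(0,s)\}$, the last slot up to $t$ at which the aggregate backlog is zero; this set is nonempty ($s=0$) and finite, so $v$ is well defined and $(v,t]$ is a backlogged interval for the aggregate. Two facts about $v$. (i) $\D_o(0,v)=\A_o(0,v)$ and $\D_c(0,v)=\A_c(0,v)$: a strict inequality in either factor of $\D(0,v)=\D_o(0,v)\D_c(0,v)$ would, with $\D_o\le\A_o$ and $\D_c\le\A_c$, force $\D(0,v)<\A_o(0,v)\A_c(0,v)=\A(0,v)$, contradicting $\D(0,v)=\A(0,v)$. (ii) By strictness on $(v,t]$, $\D(0,t)=\D(0,v)\,\D(v,t)\ge\D(0,v)\,\S(v,t)=\A(0,v)\,\S(v,t)=\A_o(0,v)\A_c(0,v)\,\S(v,t)$.

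The rest is a short calculation. Dividing (ii) by $\D_c(0,t)$ and using $\D_c(0,t)\le\A_c(0,t)=\A_c(0,v)\A_c(v,t)$ gives $\D_o(0,t)\ge\A_o(0,v)\,\S(v,t)/\A_c(v,t)$; separately, monotonicity of $\D_o$ with (i) gives $\D_o(0,t)\ge\D_o(0,v)=\A_o(0,v)$. Taking the larger of the two,
\begin{align*}
\D_o(0,t) &\ge \A_o(0,v)\max\left\{1,\ \frac{\S(v,t)}{\A_c(v,t)}\right\} = \A_o(0,v)\,\S_o(v,t) \\
&\ge \inf_{0\le u\le t}\left\{\A_o(0,u)\,\S_o(u,t)\right\} = \A_o\conv\S_o(0,t).
\end{align*}
Since $t$ was arbitrary and $\S_o\ge 1$ by construction (hence a legitimate SNR service process), $\S_o$ is a dynamic SNR server for the through flow, which is the claim.

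The step I expect to be the real obstacle is passing, in (ii), from the one--shot dynamic--server inequality to a service guarantee over the backlogged interval $(v,t]$: the bare definition $\D(0,t)\ge\A\conv\S(0,t)$ only certifies service anchored at time $0$, and for a purely non--strict aggregate server a bursty cross flow can absorb the service so that no usable residual is left --- the bound with the $\max\{1,\cdot\}$ genuinely needs the strict (work--conserving) property, which is why I would make that property explicit. Equivalently, one may run the identical argument in the bit domain and invoke the classical residual service curve theorem for work--conserving servers, the $\max\{1,\cdot\}$ then being the familiar $[\,\cdot\,]^+$. A minor point to check is the degenerate case $v=t$, where $\S(v,t)=\S_o(v,t)=1$ and the bound reduces to the identity $\D_o(0,t)=\A_o(0,t)$, which holds because the aggregate queue is empty at $t$.
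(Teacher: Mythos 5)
The paper does not prove this lemma itself; it defers to the cited reference \cite{AlZubaidyTON}, and your argument is a correct, self-contained reconstruction of exactly the standard leftover-service derivation used there: pick the last instant $v$ of zero aggregate backlog, use work-conservation to get $\D(v,t)\ge\S(v,t)$ on the ensuing busy period, and divide out the cross-flow departures. Your one substantive observation is also right and worth making explicit: the hypothesis as literally stated (a merely \emph{dynamic} SNR server for the aggregate) is not enough for a residual service guarantee, and the $\max\{1,\cdot\}$ bound genuinely requires the strict/work-conserving property --- which holds in this paper's setting because each node transmits at the instantaneous Shannon capacity whenever backlogged, and which is assumed explicitly in the cited source.
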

The proof of Lemma 2 can be found in \cite{AlZubaidyTON}. In the rest of the paper, for notational simplicity, we consider that there is no cross traffic.


\section{Performance of Video Communication} \label{sec:ModelDes}
In this section we present our main contribution: A system model for adaptive video transmission over a multi-hop wireless network and a bound on the received video quality  in terms of the parameters of the transmitted video, as well as the underlying fading channels' parameters. 
We first derive a general expression on the lower bound of the received rate under playout delay constraint and frame based transmission. Then we give the bound for transmissions over multihop wireless channels, specifically considering Rayleigh fading. Finally we derive the bound on the playout bitrate, considering the layered structure, and address the feasibility of solving the optimal rate adaptation problem.

\subsection{Lower Bound for the Received Rate } \label{sec:main}


We investigate a video decoder which operates  as follows: At time $\tau_i+T_D$ it considers the content of the playout buffer. It then drops all content that belongs to  frames $j<i$ (i.e., late arrivals from previous frames), then removes and decodes all frame content that belongs to frame $i$; arrivals from subsequent frames remain in the playout buffer. The modelling challenge is twofold. First, the received rate should include only data that belongs to a given frame. Second, we would like to derive a lower bound of the received rate $R^i_D$, while network calculus usually considers its upper bound to characterize backlog and delay.

A statistical description of the rate at the decoder, $R^i_D$, can be obtained by observing the departure process of the wireless network, $D(\tau, t)$. Specifically, $R^i_D$ can be obtained  by considering all departures during the time period from frame generation until playout, that is, $D(\tau_i, \tau_i +T_D)$, and then counting only $D^i(\tau_i, \tau_i + T_D)$, the part of the traffic that belongs to frame $i$. Since the instantaneous received rate at the decoder, $R^i_D$, includes  only traffic  that belongs to fully received layers by the frame's playout deadline, we can write
\begin{equation}\label{eq:RD_di}
R^i_D = { \left \lfloor{ D^i(\tau_i, \tau_i + T_D) \over m+h } \right \rfloor \cdot m \over T_f}
\, .
\end{equation}

A probabilistic lower bound on the departures belonging to frame $i$ during the period $[\tau_i, \tau_i + T_D)$,  $D^i(\tau_i, \tau_i + T_D), i=1,2, \cdots$, is given by the following lemma.

\begin{lemma} \label{lem:Di}
Given a frame $i$
generated at time $\tau_i$ and destined to a decoder with playout deadline $T_D$, the departure process $D^i(\tau_i, \tau_i + T_D), i=1,2, \cdots$,  is characterized as follows
\begin{align}\label{eq:Di_bound}
 & \P (D^i(\tau_i, \tau_i +T_D) \le d) \le \eps \,  \notag \\
& \Leftrightarrow  \P (D(0, \tau_i +T_D)   \le d + (m+h)nL \tau_i )   \le \eps  \, ,
\end{align}
for all $d \le (m+h)L$ and all $\eps \in [0,1]$.
\end{lemma}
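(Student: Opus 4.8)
The plan is to establish that the two events appearing in~\eqref{eq:Di_bound}, namely $\{D^i(\tau_i,\tau_i+T_D)\le d\}$ and $\{D(0,\tau_i+T_D)\le d+(m+h)nL\tau_i\}$, are in fact one and the same event; the stated equivalence of the two probability bounds is then immediate, since equal events have equal probability and hence one is at most $\eps$ exactly when the other is, for every $\eps\in[0,1]$.

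First I would identify which bits of the cumulative arrival stream carry frame~$i$. By the constant-rate arrival model~\eqref{eq:arrival}, the amount of data generated before the generation instant $\tau_i$ of frame~$i$ is $A(0,\tau_i)=(m+h)nL\tau_i=:a_i$, and frame~$i$ contributes exactly the next $(m+h)L$ bits, i.e.\ the bits indexed in the interval $(a_i,\,a_i+(m+h)L]$ of the arrival stream. Next I would invoke the network assumptions of Section~\ref{sec:SysDes} --- buffers are locally FIFO and there is no dropping or duplication of content --- to conclude that the flow leaves the network in arrival order, so that at any time $t$ the bits that have departed are precisely the first $D(0,t)$ bits of the arrival stream. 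Combining this with causality --- a bit of frame~$i$ cannot depart before $\tau_i$, so that $D^i(\tau_i,\tau_i+T_D)$ is just the number of frame-$i$ bits contained in $D(0,\tau_i+T_D)$ --- and intersecting the departed prefix with frame~$i$'s segment yields, with $(\cdot)^+:=\max\{\cdot,0\}$,
\begin{equation*}
D^i(\tau_i,\tau_i+T_D)=\min\bigl\{\,\bigl(D(0,\tau_i+T_D)-a_i\bigr)^+,\;(m+h)L\,\bigr\}.
\end{equation*}

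Finally I would read off the event identity for $0\le d\le(m+h)L$. Since $D^i(\tau_i,\tau_i+T_D)\le(m+h)L$ always, for $d<(m+h)L$ the event $\{D^i(\tau_i,\tau_i+T_D)\le d\}$ coincides with $\{(D(0,\tau_i+T_D)-a_i)^+\le d\}$, and as $d\ge0$ this equals $\{D(0,\tau_i+T_D)\le a_i+d\}$; substituting $a_i=(m+h)nL\tau_i$ gives exactly the right-hand event of~\eqref{eq:Di_bound}. The endpoint $d=(m+h)L$ is degenerate: the corresponding inequality arises in the application via the floor in~\eqref{eq:RD_di} as a strict one, and the same computation goes through verbatim with ``$\le$'' replaced by ``$<$''. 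I expect the only real work to be the bookkeeping in the second step --- turning ``locally FIFO, no loss, no duplication'' into a rigorous statement that the flow exits the network in arrival order, and then keeping track of the clipping at $(m+h)L$ and of the endpoint $d=(m+h)L$ without slippage.
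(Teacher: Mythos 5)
Your argument is correct and is in substance the same as the paper's: the paper subtracts the backlog $B(\tau_i)=A(0,\tau_i)-D(0,\tau_i)$ from $D(\tau_i,\tau_i+T_D)$ and thereby reaches exactly the identity $D^i(\tau_i,\tau_i+T_D)=\min\bigl\{\bigl(D(0,\tau_i+T_D)-A(0,\tau_i)\bigr)^+,\,(m+h)L\bigr\}$ that you obtain directly by FIFO prefix-counting of the arrival stream. Your explicit treatment of the clipping at $(m+h)L$ and of the endpoint $d=(m+h)L$ is, if anything, slightly more careful than the paper's.
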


It is worth noting that this probability is equal to 1 for all $ d > (m+h)L$, i.e., departures belonging to a frame $i$ can never exceed the frame size.

\begin{figure}
\centering
\includegraphics [width=4.2in]{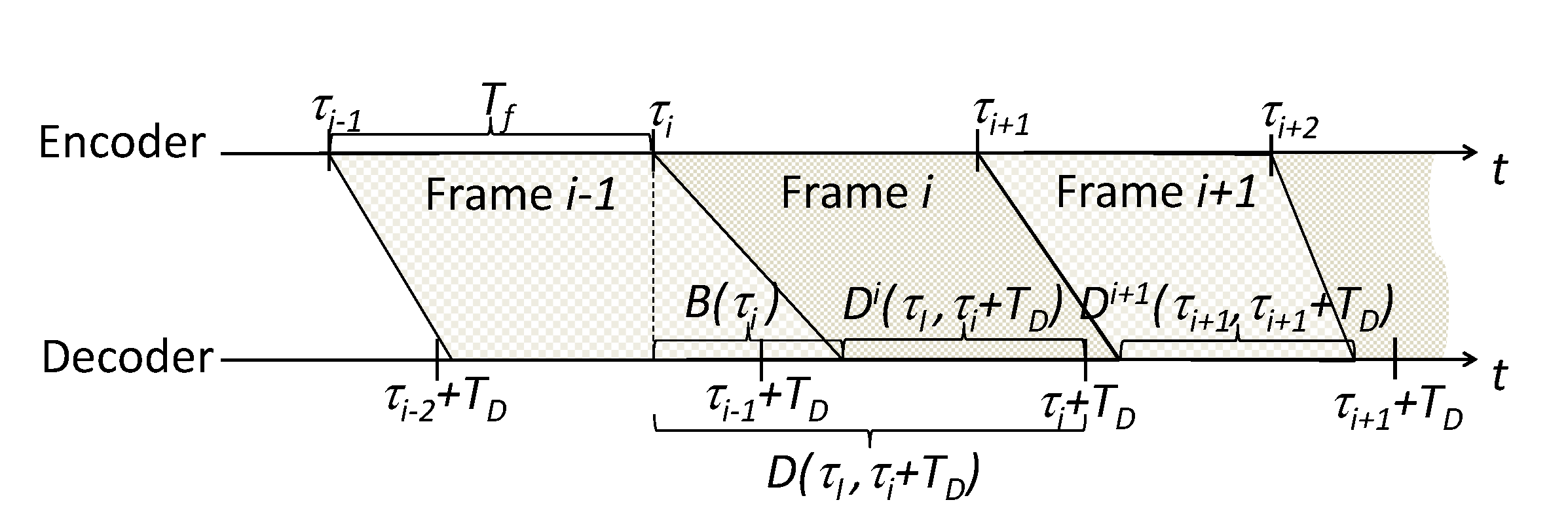}
\caption{Determination of $D^i(\tau_i, \tau_i + T_D)$.}
\label{fig:playoutmodel}
\end{figure}

\begin{proof}
Fig. \ref{fig:playoutmodel} shows the encoding, transmission, and decoding of the consecutive $i-1,i,i+1$ frames and can be used to derive $D^i(\tau_i, \tau_i + T_D)$.
We express $D^i(\tau_i, \tau_i + T_D)$ by first considering all departures $D(\tau_i, \tau_i +T_D)$ and then removing traffic that does not belong to frame $i$. As shown in Fig. \ref{fig:playoutmodel}, the departures belonging to previous frames within the interval $[ \tau_i, \tau_i +T_D)$ are equal to the backlog $B(\tau_i)$ at time $\tau_i$, i.e., the traffic from all previous frames that is still in the network when the $i^{th}$ frame arrives. Once we remove $B(\tau_i)$ the remaining departures belong to frame $i$, up to a size of $(m+h)L$, followed by traffic from subsequent frames. Using this argument we arrive at the following equivalence statement
%
%
%
%
%
%
%
%
\begin{align}\label{eq:Di}
 & \P (D^i(\tau_i, \tau_i +T_D) \le d) \le \eps \, \notag \\
& \Leftrightarrow \P (D(\tau_i, \tau_i +T_D) - B(\tau_i) \le d)  \le \eps  \, ,
\end{align}
for all $d \le (m+h)L$.

Then using the fact that the backlog at any time $\tau$ is given by the difference of all arrivals and all departures from time $t=0$, where $B(0) =0$, until time $t=\tau$, the right hand side of \eqref{eq:Di} can be  evaluated as follows
%
\begin{align}\label{eq:Di_bound1}
 \P (& D(\tau_i, \tau_i +T_D) - B(\tau_i) \le d)  \notag\\
 & =  \P (D(\tau_i, \tau_i +T_D) - A(0, \tau_i) + D(0, \tau_i) \le d)  \notag \\
 & =  \P (D(0, \tau_i +T_D) - A(0, \tau_i)  \le d) \notag  \\
 & =  \P (D(0, \tau_i +T_D)   \le d + A(0, \tau_i))  \notag \\
 & =  \P (D(0, \tau_i +T_D)   \le d + (m+h)nL \tau_i ) \, .
\end{align}
Substituting \eqref{eq:Di_bound1} in \eqref{eq:Di}, the lemma follows.
\end{proof}

Lemma \ref{lem:Di} states that a probabilistic lower bound for $D^i$ can be obtained in terms of the probabilistic lower bound on the departure process $D$  given by  \eqref{eq:Di_bound1}, for the specific arrival process described in  \eqref{eq:arrival}.
When the arrival and service processes have stationary increments, which is the case here, then  $D^i$ is identically distributed for all $i$.
The next step is to derive this bound, which we can accomplish using    network calculus.

\begin{lemma} \label{lem:LB}
For any work-conserving server with dynamic bivariate service process $S(\tau,t)$
and an arrival process $A(\tau,t)$, the departure process $D(\tau,t)$ is bounded as

\begin{equation} \label{eq:LowerBoundD}
D (\tau,t) \ge A \oplus S (\tau,t)=
\inf_{\tau \leq u \le t} \big \{ A (\tau,u) + S (u,t) \big \}
 \, ,
\end{equation}
where $\oplus$ denotes the \minplus~convolution.
\end{lemma}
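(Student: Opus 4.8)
Lemma~\ref{lem:LB} is the classical lower departure bound of network calculus, and the plan is to establish it by the standard busy-period argument, adapted to discrete slotted time and to the bivariate service process at hand. Fix a pair of slots $\tau\le t$ and work in the bit domain, recording the backlog $B(u)=A(0,u)-D(0,u)\ge 0$, which is nonnegative because the system is lossless/FIFO (equivalently $A(0,u)\ge D(0,u)$, as already noted in Section~\ref{sec:Preliminaries}). The idea is to locate the last slot at which the queue is empty inside $[\tau,t]$, namely $u^\star=\max\{u\in[\tau,t]:B(u)=0\}$, and to argue that up to $u^\star$ the departures have caught up with the arrivals, while from $u^\star$ onward the server is continuously backlogged and therefore, being work-conserving with service process $S$, outputs at least the offered service $S(u^\star,t)$.

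Concretely, I would first dispose of the degenerate case in which $\{u\in[\tau,t]:B(u)=0\}$ is empty: then the server is backlogged throughout $(\tau,t]$, work-conservation gives $D(\tau,t)\ge S(\tau,t)$, and since $A(\tau,\tau)=0$ this already yields $D(\tau,t)\ge A(\tau,\tau)+S(\tau,t)\ge\inf_{\tau\le u\le t}\{A(\tau,u)+S(u,t)\}$. In the main case $u^\star$ is well defined; $B(u^\star)=0$ gives $D(0,u^\star)=A(0,u^\star)$, and since $B(u)>0$ on $(u^\star,t]$ the server is backlogged there, so $D(u^\star,t)\ge S(u^\star,t)$. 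Telescoping $D(\tau,t)=\big[D(0,u^\star)-D(0,\tau)\big]+D(u^\star,t)$ and using $D(0,\tau)\le A(0,\tau)$ together with additivity of arrivals gives $D(\tau,t)\ge\big[A(0,u^\star)-A(0,\tau)\big]+S(u^\star,t)=A(\tau,u^\star)+S(u^\star,t)\ge\inf_{\tau\le u\le t}\{A(\tau,u)+S(u,t)\}$, which is the claim. An alternative, possibly cleaner route consistent with the paper's conventions is to skip busy periods entirely: invoke the dynamic-server inequality in its bit-domain form $D(0,t)\ge\inf_{0\le u\le t}\{A(0,u)+S(u,t)\}$ (the image of $\mathcal D(0,t)\ge\mathcal A\conv\mathcal S(0,t)$ under the logarithm), subtract $D(0,\tau)$, and split on whether the minimising index $u_0$ lies below or above $\tau$ --- if $u_0\le\tau$ use additivity of $S$ and nonnegativity of $A,S$, if $u_0\ge\tau$ use $D(0,\tau)\le A(0,\tau)$ --- and in either case the bound collapses to $\inf_{\tau\le u\le t}\{A(\tau,u)+S(u,t)\}$.

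The step I expect to be the real work is making precise what ``work-conserving server with dynamic bivariate service process'' buys us over a backlogged interval, i.e.\ justifying $D(u^\star,t)\ge S(u^\star,t)$: this is the \emph{strict service curve} property rather than the weaker $(0,t)$ input--output inequality that defines a dynamic server in Section~\ref{sec:Preliminaries}, so one must state it as part of the hypothesis (or derive it from the slot-by-slot dynamics of the channel capacity in \eqref{eq:service}). The only other care needed is keeping the discrete-time bookkeeping honest --- half-open slot intervals $[\tau,t)$ versus $(\tau,t]$, and whether the backlog is evaluated at the start or the end of a slot --- after which the rest is the routine telescoping of $D(\tau,t)=D(0,t)-D(0,\tau)$ together with the nonnegativity of $A$, $B$, and $D$.
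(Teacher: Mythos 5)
Your primary argument is correct and takes a genuinely different route from the paper's. You decompose the interval $[\tau,t]$ directly at the last backlog-free slot $u^\star$, use $D(0,u^\star)=A(0,u^\star)$ there, and use work-conservation to get $D(u^\star,t)\ge S(u^\star,t)$ over the ensuing backlogged period --- the classical busy-period argument, which produces the bivariate bound \eqref{eq:LowerBoundD} in one step. The paper instead starts from Reich's recursion $B(t)=[B(t-1)+a(t)-s(t)]^+$ to bound the backlog, deduces the univariate bound $D(0,t)\ge A\oplus S(0,t)$, subtracts $D(0,\tau)\le A(0,\tau)$, and then restricts the infimum from $[0,t]$ to $[\tau,t]$; this last restriction is the delicate step (the paper's justification via ``$A(0,u)-A(0,\tau)=0$ for $u<\tau$'' is not literally true, since that difference equals $-A(u,\tau)$), and your busy-period route sidesteps it entirely, which is a genuine advantage in rigor. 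Your ``alternative route'' is essentially the paper's proof and inherits the same restriction issue, so the busy-period version is the one to keep. Both arguments ultimately rest on the same hypothesis --- that the server delivers the full slot service $s(v)$ whenever it is backlogged --- which the paper encodes implicitly through Reich's formula and which you correctly flag as a strict-service property stronger than the dynamic-server input--output inequality of Section~\ref{sec:Preliminaries}; making that hypothesis explicit, as you propose, is the right call.
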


\begin{proof}
Using Reich's recursive backlog formula we have
\begin{align*} 
B (t) &= \left [ B(t-1) + a(t) -s(t) \right]^+ \\
& \le \sup_{0 \leq u \le t} \big \{ A (u,t) - S (u,t) \big \}
 \, ,
\end{align*}
where $a(t), s(t), B(t)$ are the instantaneous arrival, service and backlog at time slot $t$ respectively.
Hence,
\begin{align*} 
D (0,t) &= A(0,t) - B(t) \\
& \ge \inf_{0 \leq u \le t} \big \{ A (0,u) + S (u,t) \big \}
& = A \oplus S (0,t)
 \, ,
\end{align*}
where in the second step we used the fact that $A (0,u)=A (0,t) -A (u,t)$.

But due to causality we also have
$$D (0,\tau) \le A (0,\tau) \, .$$
Then for work-conserving server and for $0 \le \tau \le t$ we get
\begin{align*} 
D (\tau,t) &= D(0,t) - D(0, \tau) \\
& \ge A \oplus S (0,t) - A(0, \tau) \\
& =\inf_{0 \leq u \le t} \big \{ A (0,u) -A(0,\tau) + S (u,t) \big \}
 \, .
\end{align*}
Since $A (0,u) -A(0,\tau) = 0 \, , \,\forall u<\tau $, and since
$$\exists u \ge \tau \quad \text{s.t.} \quad A(\tau, u) + S (u,t) \le  S(\tau, t)   \, .$$
Then,
$ 
D (\tau,t) \ge \inf\limits_{\tau \leq u \le t} \big \{ A (\tau,u) + S (u,t) \big \} $
 and the lemma follows.
\end{proof}

\subsection{Departure Process Lower Bound for Wireless Channels}

To use Lemma \ref{lem:LB}, we must evaluate the right hand side of \eqref{eq:LowerBoundD} which is not an easy task for a wireless channel where $S(\tau,t)$ is a randomly varying process due to random fading. Therefore, with the following theorem we provide a probabilistic bound on $D(\tau,t)$ in terms of the Mellin transform of the arrival and service process by using the \mx~network calculus approach \cite{Zubaidy_INFOCOM13}.

\begin{theorem} \label{thm:LowerBound}
Let $\A$ be the SNR arrival process to a work-conserving queuing system with SNR service process $\S$,
then for any $0 \le \tau \le t$ and any $s<1$,   a lower bound $d$ ($d \ge 0$) for the departure process $D(\tau,t)$ must satisfy the following inequality
\begin{align} \label{eq:DBound}
\P (D (\tau,t) \le d) &  \le  e^{(1-s)d} \sum_{u = \tau}^t \M_{\A} (s,\tau,u) \cdot \M_{\S} (s,u,t) \, .
\end{align}
\end{theorem}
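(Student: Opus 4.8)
The plan is to lift the deterministic bound of Lemma~\ref{lem:LB} into the SNR domain, apply a Chernoff/Markov-type argument adapted to the exponent $s<1$, and then invoke Lemma~\ref{lem:MTConvDecon} to split the resulting Mellin transform across the concatenation. First I would note that $x\mapsto e^x$ is increasing, so exponentiating the estimate $D(\tau,t)\ge \inf_{\tau\le u\le t}\{A(\tau,u)+S(u,t)\}$ from Lemma~\ref{lem:LB} gives
\[
e^{D(\tau,t)}\ \ge\ \inf_{\tau\le u\le t}\bigl\{e^{A(\tau,u)}\,e^{S(u,t)}\bigr\}\ =\ \A\conv\S(\tau,t),
\]
using $\A=e^{A}$, $\S=e^{S}$ and the definition of the \mx~convolution. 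Consequently the event $\{D(\tau,t)\le d\}$ is contained in $\{\A\conv\S(\tau,t)\le e^{d}\}$, and it suffices to bound the probability of the latter.

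Next, write $Z:=\A\conv\S(\tau,t)$. Since every SNR process is bounded below by $1$ (take $u=t$ in the infimum, where $\S(t,t)=1$ and $\A(\tau,t)\ge 1$), $Z$ is a strictly positive random variable, so $Z^{s-1}$ is well defined; moreover, because $s-1<0$, the map $z\mapsto z^{s-1}$ is order-reversing on $(0,\infty)$, hence $\{Z\le e^{d}\}=\{Z^{s-1}\ge e^{(s-1)d}\}$. Applying Markov's inequality to the nonnegative variable $Z^{s-1}$ and recalling $\M_Z(s)=\E[Z^{s-1}]$,
\[
\P\bigl(D(\tau,t)\le d\bigr)\ \le\ \P\bigl(Z^{s-1}\ge e^{(s-1)d}\bigr)\ \le\ e^{(1-s)d}\,\E\bigl[Z^{s-1}\bigr]\ =\ e^{(1-s)d}\,\M_{\A\conv\S}(s,\tau,t).
\]

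Finally, the arrival process in~\eqref{eq:arrival} is deterministic and therefore trivially independent of $\S$, so Lemma~\ref{lem:MTConvDecon} applies and gives $\M_{\A\conv\S}(s,\tau,t)\le\sum_{u=\tau}^{t}\M_{\A}(s,\tau,u)\,\M_{\S}(s,u,t)$ for $s<1$; substituting this into the previous display yields~\eqref{eq:DBound}. The step I expect to need the most care is the second one: the standard Chernoff bound is geared toward \emph{upper} tails, and here we must bound a \emph{lower} tail, which forces the use of the negative power $s-1$ and hence the order-reversal — so the argument only works once strict positivity of $Z$ is confirmed (guaranteed by each SNR factor being $\ge 1$). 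A secondary, more cosmetic point is checking that Lemma~\ref{lem:MTConvDecon}'s independence hypothesis genuinely holds; for the constant-rate stream considered here it is immediate, whereas for variable-rate video one would instead require the source to be statistically independent of the channel fading.
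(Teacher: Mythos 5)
Your proposal is correct and follows essentially the same route as the paper: both exponentiate Lemma~\ref{lem:LB} to obtain $\D(\tau,t)\ge\A\conv\S(\tau,t)$, exploit the order reversal of $z\mapsto z^{s-1}$ for $s<1$ together with Markov's inequality, and then bound the Mellin transform of the \mx~convolution by the sum $\sum_{u=\tau}^{t}\M_{\A}(s,\tau,u)\,\M_{\S}(s,u,t)$. The only cosmetic differences are that you pass to the event $\{\A\conv\S\le e^{d}\}$ before applying Markov (the paper applies Markov to $\D^{s-1}$ first and then uses the order-reversing property of the Mellin transform for $s<1$) and that you cite Lemma~\ref{lem:MTConvDecon} where the paper re-derives the same convolution bound inline via the union bound; your explicit checks of the positivity of $\A\conv\S$ and of the independence hypothesis are welcome additions rather than deviations.
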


\begin{proof}
 We start by formulating a probabilistic lower bound on the departures  in terms of the SNR departure process $\D$, $ \forall s<1 $, as follows
\begin{align} \label{eq:DBound1}
\P (D (\tau,t) \le d) & = \P(\D(\tau,t) \le e^d) \notag\\
&  = \P(\D^{s-1}(\tau,t) \ge e^{(s-1)d})\,   \notag\\
&  \le e^{(1-s)d} \M_{\D } (s,\tau,t)
\, ,
\end{align}
where we used the assumption $s<1 $ to obtain the second line and then we applied Markov's inequality and used the definition of the Mellin transform to arrive at the last step.

Using Lemma \ref{lem:LB}, the SNR departure process $\D$ can be bounded as follows
\begin{align} \label{eq:convolution}
\D (\tau,t) =e^{D (\tau,t)} &\ge e^{\inf_{\tau \leq u \le t} \big \{ A (\tau,u) + S (u,t) \big \}  } \notag  \\
&= \inf_{\tau \leq u \le t} \big \{ e^{A (\tau,u) + S (u,t)} \big \} \notag \\
& =  \inf_{\tau \leq u \le t} \big \{ \A (\tau,u) \cdot \S (u,t) \big \} \notag \\
& = \A \otimes \S (\tau,t)  \, ,
\end{align}
where we used the definition of the \mx~convolution in the last step.

Note that when $s >1$, the Mellin transform  is order-preserving. On the other hand, when $s <1$, the order is reversed \cite{Zubaidy_INFOCOM13}.
Hence, the Mellin transform for the SNR departure process   for any $s<1$ is computed using \eqref{eq:convolution} as follows
\begin{align} \label{eq:convolution1}
\M_{\D} (s,\tau,t) & \le \M_{\A \otimes \S } (s,\tau,t) \notag \\
& =\E \left[ \left( \inf_{\tau \leq u \le t} \big \{ \A (\tau,u) \cdot \S (u,t) \big \} \right)^{s-1} \right ] \notag \\
& = \E \left[  \sup_{\tau \leq u \le t} \Bigr \{ \left( \A (\tau,u) \cdot \S (u,t) \right)^{s-1} \Bigr \}  \right ] \notag \\
& \le \sum_{u = \tau}^t \M_{\A} (s,\tau,u) \cdot \M_{\S} (s,u,t) \, ,
\end{align}
where we used the non-negativity of $\A$ and $\S$ and their independence and then we applied the union bound in the last step.

Substituting \eqref{eq:convolution1} into \eqref{eq:DBound1} the theorem follows.
%
%
%
%
\end{proof}


\subsection{ Lower Bound   for Multihop Rayleigh Channels} \label{sec:Rayleigh}
We will now use Theorem~\ref{thm:LowerBound} to obtain a probabilistic lower bound on the departure process for an $N$-hop wireless network subject to Rayleigh fading.
{For simplicity,  we present results for the case when for  $\bar{\gamma}_k=\bar{\gamma}$ for the $N$ hops,
  but the methodology works for non-identically distributed channel
  fading using a more complex representation of the network service process as shown in \cite{Petreska_2015}.}

The instantaneous SNR $\gamma_t$ of a Rayleigh fading channel is exponentially distributed with average $\bar{\gamma}$. Then the Mellin transform for the cumulative service process of a Rayleigh fading channel defined in \eqref{eq:SNR_service1} is  given by \cite{Zubaidy_INFOCOM13}
\begin{align} \label{eq:Rayleigh_service}
\M_{\S} (s,\tau,t) & \le \left( e^{1 \over \bar{\gamma}} \bar{\gamma}^{s-1} \Gamma(s, { \bar{\gamma}^{-1}}) \right)^{t-\tau} \, ,
\end{align}
where $\Gamma(s, a) = \int_a^\infty x^{s-1} e^{-x} dx $ is the incomplete Gamma function.

\begin{theorem}\label{thm:RayleighNet-LB}
A probabilistic lower bound on the departure of $N$-hop i.i.d. Rayleigh channels with average SNR $\bar \gamma$, when the arrival process is given by \eqref{eq:arrival},  and for $0 \le \tau \le t$ is
\begin{align} \label{eq:DBoundRayleigh1}
  \P (D (\tau,t)\!  \le \!  d(t \! -\!  \tau) ) \!
 \le \! \inf_{s<1}\! \left\{\!  e^{(s-1)((m+ h)nL(t-\tau)-d (t-\tau)) } \over \left(1- V(1-s)  \right)^{N}\!\! \!  \right \}
\end{align}
whenever the stability condition
\begin{equation}\label{eq:StabilityCondition}
V(1-s) \deq e^{(1-s) (m+ h)nL} e^{1 \over \bar{\gamma}} \bar{\gamma}^{s-1} \Gamma(s, { 1 \over \bar{\gamma}} ) <1
\end{equation}
is satisfied.
\end{theorem}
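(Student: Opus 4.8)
The plan is to treat the whole $N$-hop path as a single work-conserving system whose SNR service process is the concatenation $\S_{\rm net} = \S_1 \conv \cdots \conv \S_N$ of \eqref{eq:concatenationN}, apply Theorem~\ref{thm:LowerBound} to it, and reduce the whole problem to bounding the Mellin transform $\M_{\S_{\rm net}}$ by iterating Lemma~\ref{lem:MTConvDecon}. With the lower bound $d(t-\tau)$ playing the role of $d$, Theorem~\ref{thm:LowerBound} gives, for every $s<1$,
\[
\P\bigl(D(\tau,t) \le d(t-\tau)\bigr) \le e^{(1-s)d(t-\tau)} \sum_{u=\tau}^{t} \M_{\A}(s,\tau,u)\,\M_{\S_{\rm net}}(s,u,t).
\]
The arrival Mellin transform is already explicit, $\M_{\A}(s,\tau,u) = e^{(s-1)(m+h)nL(u-\tau)}$ by \eqref{eq:MA}, so only $\M_{\S_{\rm net}}$ remains to be estimated.

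To estimate it I would apply the convolution inequality \eqref{eq:convMT} $N-1$ times, using the assumed independence of the fading on consecutive links, to obtain
\[
\M_{\S_{\rm net}}(s,u,t) \le \sum_{u \le u_1 \le \cdots \le u_{N-1} \le t} \prod_{k=1}^{N}\M_{\S_k}(s,u_{k-1},u_k),
\]
with $u_0 := u$ and $u_N := t$. Substituting the per-hop Rayleigh bound \eqref{eq:Rayleigh_service}, each factor is at most $\xi(s)^{u_k-u_{k-1}}$ with $\xi(s) \deq e^{1/\bar{\gamma}}\bar{\gamma}^{s-1}\Gamma(s,1/\bar{\gamma})$, so the product telescopes to $\xi(s)^{t-u}$, which no longer depends on the split points. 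Since the number of admissible tuples $(u_1,\dots,u_{N-1})$ is $\binom{t-u+N-1}{N-1}$, this yields $\M_{\S_{\rm net}}(s,u,t) \le \binom{t-u+N-1}{N-1}\xi(s)^{t-u}$.

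Plugging this and $\M_{\A}$ back in, substituting the index $j = t-u$, and pulling out the factor $e^{(s-1)(m+h)nL(t-\tau)}$, the geometric base left in the sum is precisely $e^{(1-s)(m+h)nL}\xi(s) = V(1-s)$ of \eqref{eq:StabilityCondition}, so
\[
\P\bigl(D(\tau,t) \le d(t-\tau)\bigr) \le e^{(s-1)\left((m+h)nL(t-\tau)-d(t-\tau)\right)} \sum_{j=0}^{t-\tau}\binom{j+N-1}{N-1}V(1-s)^{j}.
\]
Under the stability condition $V(1-s)<1$ the finite sum is bounded by the negative-binomial series $\sum_{j\ge 0}\binom{j+N-1}{N-1}x^{j} = (1-x)^{-N}$, producing the factor $\bigl(1-V(1-s)\bigr)^{-N}$; taking the infimum over $s<1$ then gives \eqref{eq:DBoundRayleigh1}.

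I expect the main obstacle to be the combinatorial bookkeeping of the iterated \mx~convolution: confirming that the per-hop Rayleigh Mellin bounds telescope so that every term of the $(N-1)$-fold sum depends only on $u$, counting the ordered tuples of split points correctly, and recognizing the resulting finite sum as a truncated negative-binomial series — which is exactly where the stability condition $V(1-s)<1$ is needed, so that the truncated sum can be replaced by $(1-V(1-s))^{-N}$. A minor but essential algebraic check along the way is that the arrival factor $e^{(1-s)(m+h)nL}$ fuses with $\xi(s)$ into the single quantity $V(1-s)$ defined in \eqref{eq:StabilityCondition}.
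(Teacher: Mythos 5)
Your proposal is correct and follows essentially the same route as the paper's proof: apply Theorem~\ref{thm:LowerBound} to the concatenated service process, iterate the convolution bound of Lemma~\ref{lem:MTConvDecon} to get the binomial-coefficient factor $\binom{N-1+t-u}{t-u}$ times $\xi(s)^{t-u}$, and then bound the resulting finite sum by the negative-binomial series $(1-V(1-s))^{-N}$ under the stability condition. Your explicit accounting of the split points $u \le u_1 \le \cdots \le u_{N-1} \le t$ is a welcome elaboration of the step the paper only sketches.
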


\begin{proof}
Let the   service  offered by a network of $N$ store and forward nodes be characterized by the SNR service  process $S(s,\tau,t)={\S_{\rm net}} (s,\tau,t)$.
Then using Theorem \ref{thm:LowerBound} we obtain for all $s<1$
\begin{align} \label{eq:DBound-Net}
\P (D (\tau,t) < d) &  \le  e^{(1-s)d} \sum_{u = \tau}^t \M_{\A} (s,\tau,u) \cdot \M_{\S_{\rm net}} (s,u,t) \, .
\end{align}

A bound on $\M_{\S_{\rm net}} (s,\tau,t)$ for $N$  i.i.d. Rayleigh fading channels is obtained by using the server concatenation property \eqref{eq:concatenationN} and then  applying the convolution bound in Lemma \ref{lem:MTConvDecon}  repeatedly $N-1$ times.
Then substituting  \eqref{eq:Rayleigh_service} we have for all $s <1$
%
\begin{align} \label{eq:NetService}
\M_{\S_{\rm net}} (s,\tau,t) & \le \binom{N-1+t-\tau}{t-\tau} \M_{\S } (s,\tau,t)\notag \\
&\hspace{-10mm} \le \binom{N-1+t-\tau}{t-\tau}  \left( e^{1 \over \bar{\gamma}} \bar{\gamma}^{s-1} \Gamma(s, { 1 \over \bar{\gamma}} ) \right)^{t-\tau}   \, .
\end{align}
The binomial coefficient is the result of expanding the $N-1$ sums and then collecting all terms for the i.i.d channels case \cite{AlZubaidyTON}.


Substituting  \eqref{eq:arrival} and \eqref{eq:NetService} in \eqref{eq:DBound-Net}
we obtain the following probabilistic lower bound
\begin{align} \label{eq:DBoundRayleigh}
& \P (D (\tau,t) \le d )
 \le  e^{(1-s)d } \sum_{u = \tau}^t e^{(s-1) (m+ h)nL(u-\tau) } \notag \\
& \hspace{2.5cm} \cdot \binom{N-1+t-u}{t-u}  \left( e^{1 \over \bar{\gamma}} \bar{\gamma}^{s-1} \Gamma(s, { 1 \over \bar{\gamma}} ) \right)^{t-u} \notag \\
&  \qquad\le  e^{(s\!-\!1) ((m\!+\! h)nL(t-\tau)-d) }
 \sum_{v = 0}^\infty  \binom{N\!-\!1\!+\!v}{v}  \left( V(1\!-\!s) \right)^{v}
 \, ,
\end{align}
where, we use the change of variables $v= t-u$, let $t \rightarrow \infty$ and define $V(1-s) \deq e^{(1-s) (m+ h)nL} e^{1 \over \bar{\gamma}} \bar{\gamma}^{s-1} \Gamma(s, { 1 \over \bar{\gamma}} )$.

Using the binomial identity
$$ \sum_{v=0}^\infty \binom{N-1+v}{v} x^v = {1 \over (1-x)^N} \, ,$$
for all $N \ge 1$ and $|x| <1$, the sum in \eqref{eq:DBoundRayleigh}  converges to the following
\begin{align*} 
  \P (D (\tau,t) \le d(t- \tau) )
 \le { e^{(s-1)((m+ h)nL(t-\tau)-d (t-\tau)) } \over \left(1- V(1-s) \right)^{N} }
 \, ,
\end{align*}
for all $s<1$, whenever the condition $V(1-s) <1$ is satisfied.
Optimizing over $s$ results in the best possible bound and concludes the proof.
\end{proof}

\subsection{A Bound on Playout Bitrate $R_D$}
{
Combining the  results obtained in Lemma~\ref{lem:Di} and  Theorem~\ref{thm:RayleighNet-LB}  for stable system operation we can compute a lower bound on the departures  $D^i(\tau, \tau +T_D) $ for all $s<1$ as follows
\begin{align} \label{eq:DBoundRayleigh11}
  \P (D^i(\tau, \tau +T_D) &\le d ) =
   \P (D(0, \tau +T_D) \le d+ A(0, \tau)) \notag \\
   & =  \P (D(0, \tau +T_D) \le d+ (m+h)nL \tau  )  \notag \\
 &\le \inf_{s<1} \left \{ e^{(s-1)((m+ h)nL T_D - d ) } \over \left(1- V(1-s) \right)^{N} \right \}
 \, ,
\end{align}
if $d  \le (m+h) L  $, otherwise, i.e., if $d  > (m+h) L  $, $\P (D^i(\tau, \tau +T_D)\le d) =1$.

To obtain the lower bound on the departures  such that $\P (D^i(\tau,t) \le d^{\varepsilon} ) \le \varepsilon$,  we equate the right hand side of (\ref{eq:DBoundRayleigh11}) to $\varepsilon$ and solve for $d^{\varepsilon}$ to get
\begin{align} \label{eq:DBoundRayleigh2}
 d^{\varepsilon} (T_D)
 &\ge \min \Bigg [ (m+h)L, \, \sup_{s<1} \Big \{ (m+ h)nL T_D   \notag \\
& + {1  \over 1-s }  \left[ N \log(1- V(1-s) ) + \log \varepsilon \right] \Big \} \Bigg ]
  \,.
\end{align}

Using \eqref{eq:DBoundRayleigh2},
the distribution of the number of   usable bits (i.e., bits received within the frame's playback deadline, $T_D$)  per second is bounded by

$$\P(R_D < r_D^{\varepsilon}) \le \varepsilon \, ,$$
where
\begin{equation}\label{eq:RD_NHop}
r_D^{\varepsilon} \ge { \left \lfloor{d^{\varepsilon}(T_D)  \over m+h } \right \rfloor \cdot m \over T_f}
\, .
\end{equation}
For steady state operation this corresponds to the decodable rate per frame.


Note that the right hand side of (\ref{eq:RD_NHop}) reduces to ${ m L \over T_f}$ when $d^{\varepsilon}(T_D)= (m+h)L$, i.e.,  all layers of the frame are received within the playout deadline $T_D$. This  can happen when  the underlying wireless links have high channel quality during the frame transmission, and it represents the best distortion performance that can be achieved for the given coding scheme.

\subsection{Effect of $T_D$ on Received Video Quality}
The allowable playout delay $T_D$ has a noticeable effect on the received video quality, as stated in the following corollary.
\begin{cor}\label{cor:1}
The lower bound on the  per frame departures   $d^{\eps}(T_D)$ increases linearly in the playout deadline $T_D$, independently from the network and channel conditions, and of the violation probability requirement.
\end{cor}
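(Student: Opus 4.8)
The plan is to read the claim directly off the closed-form expression \eqref{eq:DBoundRayleigh2}, essentially by inspection. First I would isolate the regime in which the statement is non-trivial, namely the range of $T_D$ for which the minimum in \eqref{eq:DBoundRayleigh2} is attained by its second argument; once the bound saturates at $(m+h)L$ (all layers delivered before the deadline) it is of course constant, not linear, so the corollary should be read as holding below that saturation threshold. On that range,
\begin{equation*}
d^{\eps}(T_D) = (m+ h)nL\,T_D + \sup_{s<1}\Big\{ \frac{1}{1-s}\big[ N\log(1-V(1-s)) + \log\eps\big]\Big\}.
\end{equation*}

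The key observation is that the supremum term is a constant with respect to $T_D$: by its definition \eqref{eq:StabilityCondition}, $V(1-s)$ depends only on the traffic parameters $m,h,n,L$, the average SNR $\bar\gamma$, and the free variable $s$, and never on $T_D$; consequently the optimizing $s^{\star}$ is also independent of $T_D$. Hence $d^{\eps}(T_D)$ is an affine function of $T_D$ whose slope is exactly $(m+h)nL = R_E$, the source bitrate. Since this slope is built only from encoding parameters and contains neither the hop count $N$, nor the channel quality $\bar\gamma$, nor the violation probability $\eps$, it is independent of the network and channel conditions and of the reliability target; those quantities influence only the additive offset (and the location of the saturation point). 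This is exactly the assertion of the corollary.

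Because the argument reduces to a one-line computation, the only care needed is bookkeeping. Concretely, I would (i) use the stability condition $V(1-s)<1$ to note that the bracketed term is well defined (and in fact non-positive, so $d^{\eps}(T_D) \le R_E T_D$) on a non-empty set of $s<1$, so that the supremand --- and hence the supremum --- is a finite quantity not involving $T_D$; (ii) observe that on the interval $0 \le T_D < T_D^{\mathrm{sat}}$, where $T_D^{\mathrm{sat}}$ is the deadline at which the affine expression first reaches $(m+h)L$, the outer $\min$ selects the affine branch; and (iii) conclude that $d^{\eps}$ has constant slope $(m+h)nL$ there. I do not expect a genuine obstacle here; the only subtlety worth flagging in the write-up is the saturation caveat, i.e.\ that the exact linear growth holds only below $T_D^{\mathrm{sat}}$.
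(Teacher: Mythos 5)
Your proposal is correct and follows essentially the same route as the paper: the paper's proof likewise rewrites \eqref{eq:DBoundRayleigh2} as the affine term $(m+h)nLT_D$ plus a supremum that does not involve $T_D$, and concludes from the constancy of the second term. Your additional caveat about the saturation at $(m+h)L$ is a point the paper's proof silently drops (it omits the outer $\min$ when rewriting), so your version is in fact slightly more careful than the published one.
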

\begin{proof}
Rewriting    \eqref{eq:DBoundRayleigh2}  as follows
\begin{align} \label{eq:cor1}
 d^{\varepsilon} (T_D)
 &\ge (m+ h)nL T_D   \notag \\
& +  \sup_{s<1} \Big \{ {1  \over 1-s }  \left[ N \log(1- V(1-s) ) + \log \varepsilon \right] \Big \}
  \,,
\end{align}
the corollary follows since the second term of   \eqref{eq:cor1} does not depend on $T_D$.
\end{proof}

}


\subsection{Optimal rate adaptation and routing}

The bounds (\ref{eq:DBoundRayleigh11}) -- (\ref{eq:RD_NHop}) characterize the effects of the system parameters on the overall system performance, under the considered Rayleigh fading process with given $\bar{\gamma}$. Based on these results, we now show how to solve the optimal rate adaptation and routing problem  \eqref{eq::objective}-\eqref{eq::constraint}.


{
The selection of the optimal number of layers per frame $L^*$ and transmission path $\mathcal{P}^*$ requires the evaluation of the right hand side (RHS) of \eqref{eq:DBoundRayleigh2} for all possible paths $\mathcal{P}$. As $V(1-s), \, s <1, $ is convex in $L$, the number of layers per frame,
the RHS of  \eqref{eq:DBoundRayleigh2} can be shown to be concave in $L$.
It can also be easily shown that $V(1-s), \, s <1, $ is convex in  $s$ whenever $V(1-s) < 1$  (see \cite{Petreska-Arxiv}) and hence, $N \log (1- V(1-s))$ in the RHS of \eqref{eq:DBoundRayleigh2} is concave in $s$.
Therefore, for each path $\mathcal{P}$ the optimal $L^*$  and the corresponding bound $d^{\varepsilon}(T_D)$ can be obtained via a  binary search.}

%


\section{Model validation and performance evaluation}\label{sec:Numerical}

The analytic model described in Section \ref{sec:ModelDes} provides a lower bound on the per frame departures $d$ within the playout deadline $T_D$. Therefore, we first validate the bounds via simulation. Then, we evaluate the effect of the network and video streaming parameters on the received quality, based on the results in \eqref{eq:DBoundRayleigh11} and \eqref{eq:DBoundRayleigh2}. The effect of $T_D$ have been addressed by Corollary \ref{cor:1}.

We consider an SVC scheme that encodes group of pictures (GOP), that is, a frame in the analytic model represents a GOP. One GOP consists of 10 video images. 25 images are generated per second, which results in $n=2.5$ frames per second. The video is coded with $n=4$ to $24$ layers of size $m=100$ kbits of video payload each, resulting in a per frame payload of $r=0.4-2.4$ Mbits, and a video transmission rate of $1-6$ Mbps. For simplicity, we consider $h=0$, since the typical header size is much smaller than the size of a layer. The playout deadline is $T_D=450$~msec, which corresponds to a strict delay constraint for real-time machine-to-machine video delivery. We consider transmission paths of $N=1,3, 5$ links, a channel of bandwidth $W=2.2$~MHz and average SNR of the fading channels in the range of $\bar{\gamma}=6 - 10$. This corresponds to average channel capacities of $C_{avg}= 4.24 - 6.39$~Mbps. We choose a slot duration of $\Delta t = 10$ msec.

We ran the simulations for a period of $10^{10}$ time slots, which allows empirical evaluation of the system performance up to a violation probability of $\eps =10^{-8}$.



\begin{figure}
\centering
\includegraphics [width=4in]{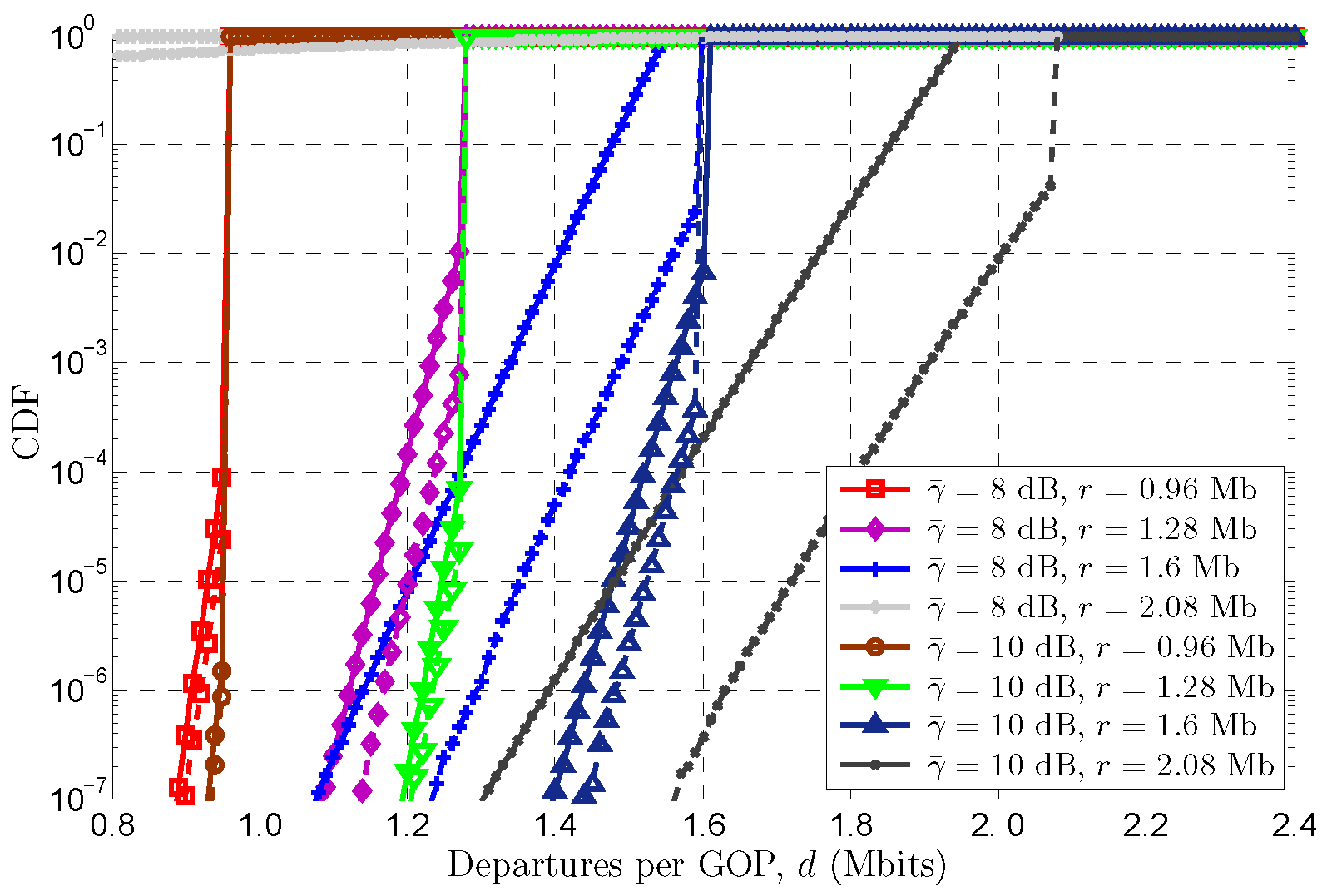}
\caption{Violation probability ($\eps^d$) (computed and simulated) vs. departure bound $d$ for SVC over multi-hop wireless network for different GOP size $r$ and for $\bar \gamma = 8, 10$ dB, with $T_D =450$ ms, $N=3$,   $W=2.2$~MHz    and    $n =2.5$ GOP/s.}
\label{fig:dep_CDF}
\end{figure}

Fig. \ref{fig:dep_CDF} shows the CDF of the per frame departures $d$, not yet considering the effect of the layering at the decoding, for $N=3$ and for various transmitted frame size  $r$ and average channel SNR $\bar{\gamma}$ values. For reference, the channel utilization for the case $r=2.08$ Mb is $0.8$ under $\bar{\gamma}= 10$ dB, and is $0.99$ for $\bar{\gamma}= 8$ dB.

The figure confirms that the model provides a lower bound on the number of bits received per frame, and shows that the empirical CDF shows the same exponential increase as the model-based lower bound. This exponential growth in $d$ can clearly be observed from   \eqref{eq:DBoundRayleigh11}.
The bound is tight for low and moderate load, but acceptable even for high utilization of $0.99$, specifically, the gradient for the model and simulation based results are equal which means that the  error diminishes as $\eps$ grows smaller.

We notice that reducing utilization, e.g., by reducing frame size for a given SNR, results in sharper curves, which means that the channel impairments have smaller effect on the video quality. On the other hand,  the figure shows that high utilization may lead to overload and low received quality, see for example the $0.99$ utilization case of $\bar{\gamma}=8$ and $r=2.08$, where the probability of receiving even $d=0.8$ is close to zero. These results reflect well that allowing transmission rates close to the average channel capacity would lead to overload and low quality streaming for latency critical applications.

\begin{figure}
\centering
\includegraphics [width=4in]{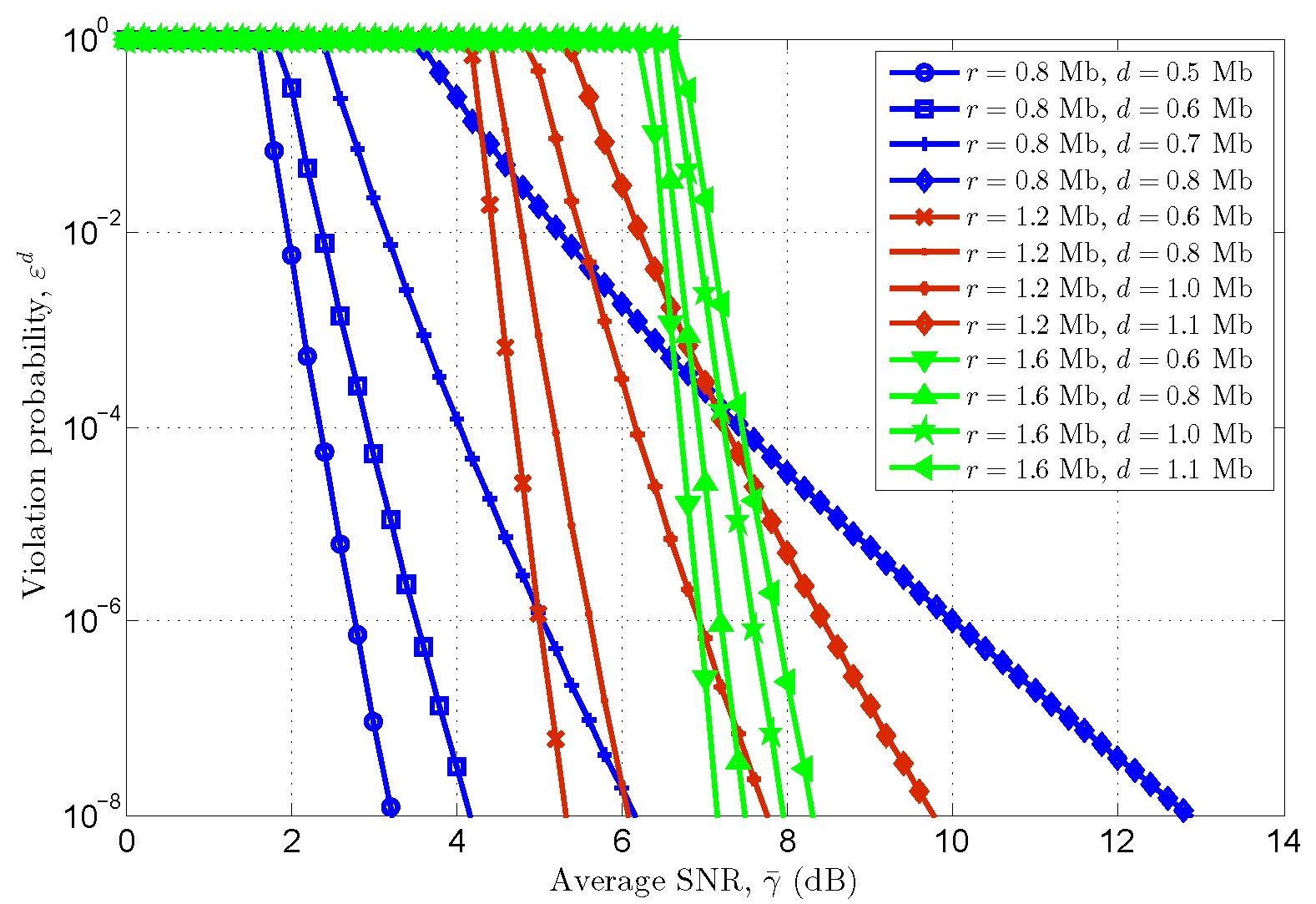}
\caption{Violation probability ($\eps^d$)  vs. average SNR ($\bar \gamma$) for SVC over multi-hop wireless network for three different GOP sizes $r= 0.8, 1.2$ and $1.6$ Mb  and for different departure within $T_D$ per frame $d$, with $T_D =450$ ms, $N=3$,   $W=2.2$~MHz    and    $n =2.5$ GOP/s.}
\label{fig:eps-vs-snr}
\end{figure}

\begin{figure}
\centering
\includegraphics [width=4in]{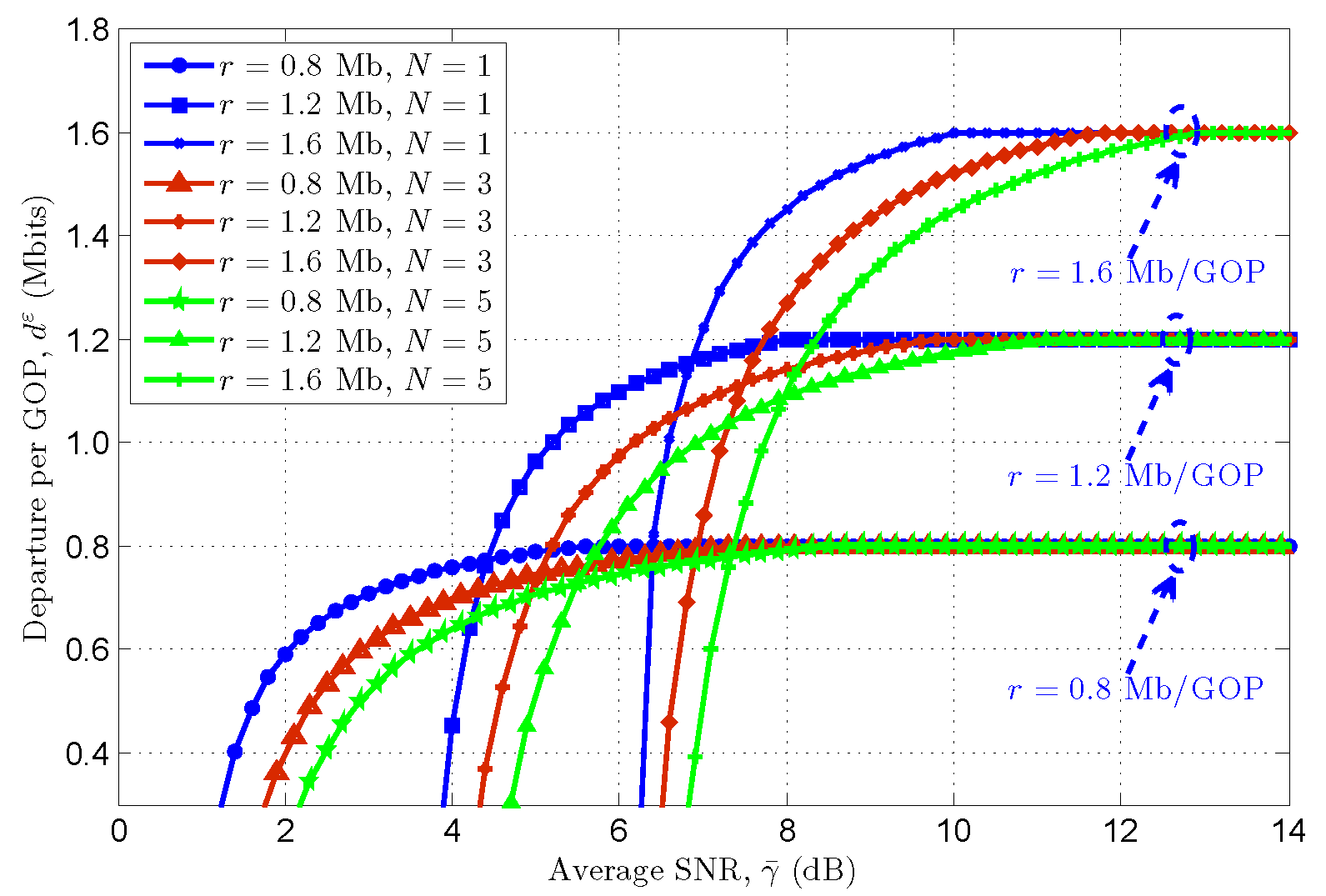}
\caption{Departure per frame ($d^{\eps}$)  vs. average SNR ($\bar \gamma$) for SVC over multi-hop wireless network for $\eps = 10^{-4}$, for three different GOP sizes $r= 0.8, 1.2$ and $1.6$ Mb  and for $N=1,3$ and $5$ hop, with $T_D =450$ ms,    $W=2.2$~MHz    and    $n =2.5$ GOP/s.}
\label{fig:dep-vs-snr}
\end{figure}

\begin{figure}[ht]
\centering
\includegraphics [width=4in]{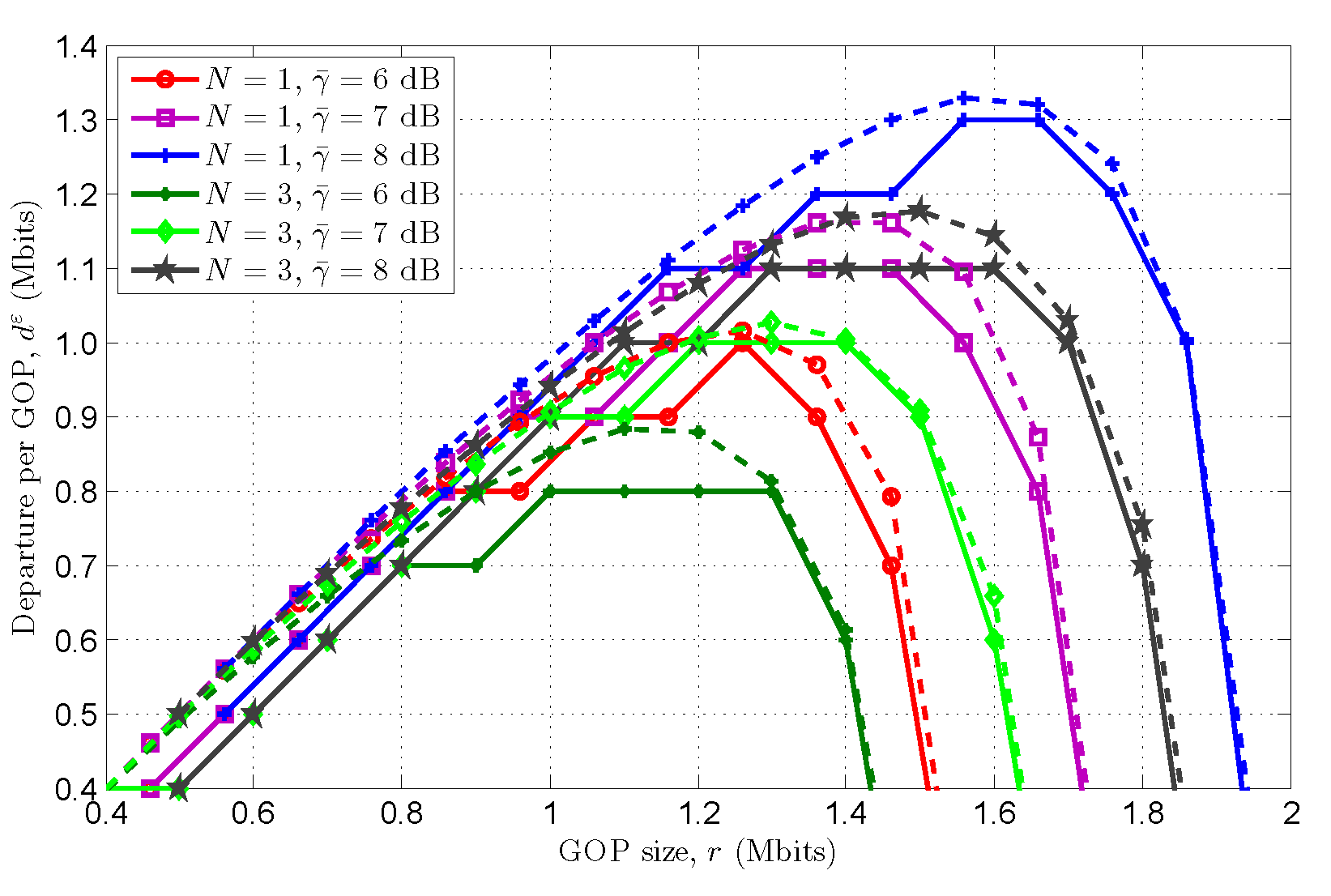}
\caption{Departure per frame ($d^{\eps}$)  vs. GOP size ($r$) for SVC over multi-hop wireless network (solid line for layered video frames and dashed line for fluid traffic model) with layer size $m=100$ kbits, for different  average SNR  ($\bar \gamma = 6,7,8$ dB) and for $N=1, 3$  hop, $\eps = 10^{-6}$, $T_D =450$ ms,     $W=2.2$~MHz    and    $n =2.5$ GOP/s.}
\label{fig:d-vs-r}
\end{figure}

Figs \ref{fig:eps-vs-snr} and \ref{fig:dep-vs-snr} evaluate the effect of the channel quality on the received video performance. Fig. \ref{fig:eps-vs-snr} shows for $N=3$ and for various $r$ and $d$, that the violation probability $\eps$ for given $d$ decreases almost exponentially when increasing the average SNR, as soon as the system becomes stable. Fig. \ref{fig:dep-vs-snr} shows how the per hop average SNR affects the per frame departures $d^{\eps}$ for a  violation probability $\eps=10^{-4}$, for different transmitted frame sizes $r$ and number of hops $N$. We can see that the SNR has significant effect on the optimal transmission scheme, for example, at $\bar{\gamma}=6$, $r=1.2$ Mbits provides the best performance among the considered frame sizes,  $r=0.8$ Mbits does not fully utilize the network, while $r=1.6$ Mbits leads to low quality due to network congestion. We also observe that the effect of number of hops, $N$, is significant at high utilization, but diminishes as $\bar{\gamma}$, and thus the channel capacity increases.

\section{Adaptive video transmission and routing}\label{sec:Adaptive}
Our analysis exposes the effect of two extreme network behaviours that influence received video quality, namely, network congestion (at high  utilization) due to bad channel quality and/or high frame rate, and network underutilization due to low transmitted frame size. It also shows that the optimal operating point, where the transmitted frame size maximizes the received video quality depends on the channel conditions and on the length of the transmission path. Since the wireless channel quality  may vary with time, the optimal performance can be achieved by adapting the transmitted frame size to the SNR of the corresponding channels. It may also be beneficial to adapt the routing to the underlying channel quality. In this section, we examine both scenarios and provide examples to illustrate the benefits of such adaptation.

To evaluate the effect of under utilization as well as system overload, Fig.~\ref{fig:d-vs-r} shows the departures per frame, $d$, that fulfill the violation probability limit $\eps=10^{-6}$, as a function of the transmitted frame size $r$, for different SNR values and number of hops.

{ {The figure shows that the frame size leading to maximum departures per frame depends on both of the network parameters. Increasing the frame size above this maximizing value leads to fast quality degradation as the network becomes more saturated.}}
In this figure we also show  the effect of layered transmission compared to its fluid counterpart, considering layer size of $m=100$ kbits. As layering affects both the possible transmitted and received frame sizes, we can see  performance degradation of a maximum of one layer size. Moreover, we can see that the same performance can be achieved under a range of transmitted frame sizes, which means that an adaptation algorithm would have to find the smallest value to maximize the performance under the lowest transmission rate and thus lowering the energy consumption.

Fig.~\ref{fig:eps-vs-r} compares the achieved violation probabilities as a function of the transmitted frame size $r$, for different per frame departure values $d$ and SNR values $\bar{\gamma}$, showing the analytic upper bounds as well as the simulation results. Again, we see that there is an optimum $r$ that minimizes $\eps$. This optimum depends significantly on $\bar{\gamma}$, and slightly also on the aimed received quality $d$. The simulation results reflect that even though the model overestimates the violation probability, the model-based optimization suggests $r$ values close to the real optimum, found via simulation.
Consider for example $\bar{\gamma}=6$ and $\eps=10^{-6}$. The model predicts that $d=0.9$ Mbits can be achieved with the required reliability with $r=1.1$ Mbits, while according to the simulation results, the combination $d=1$ Mbits, $r=1.2$ Mbits is possible too. That is, the model-based parameter selection leads to $10$\% bitrate loss only, despite the slacknesss of the violation probability bound.

\begin{figure}
\centering
\includegraphics [width=4.in]{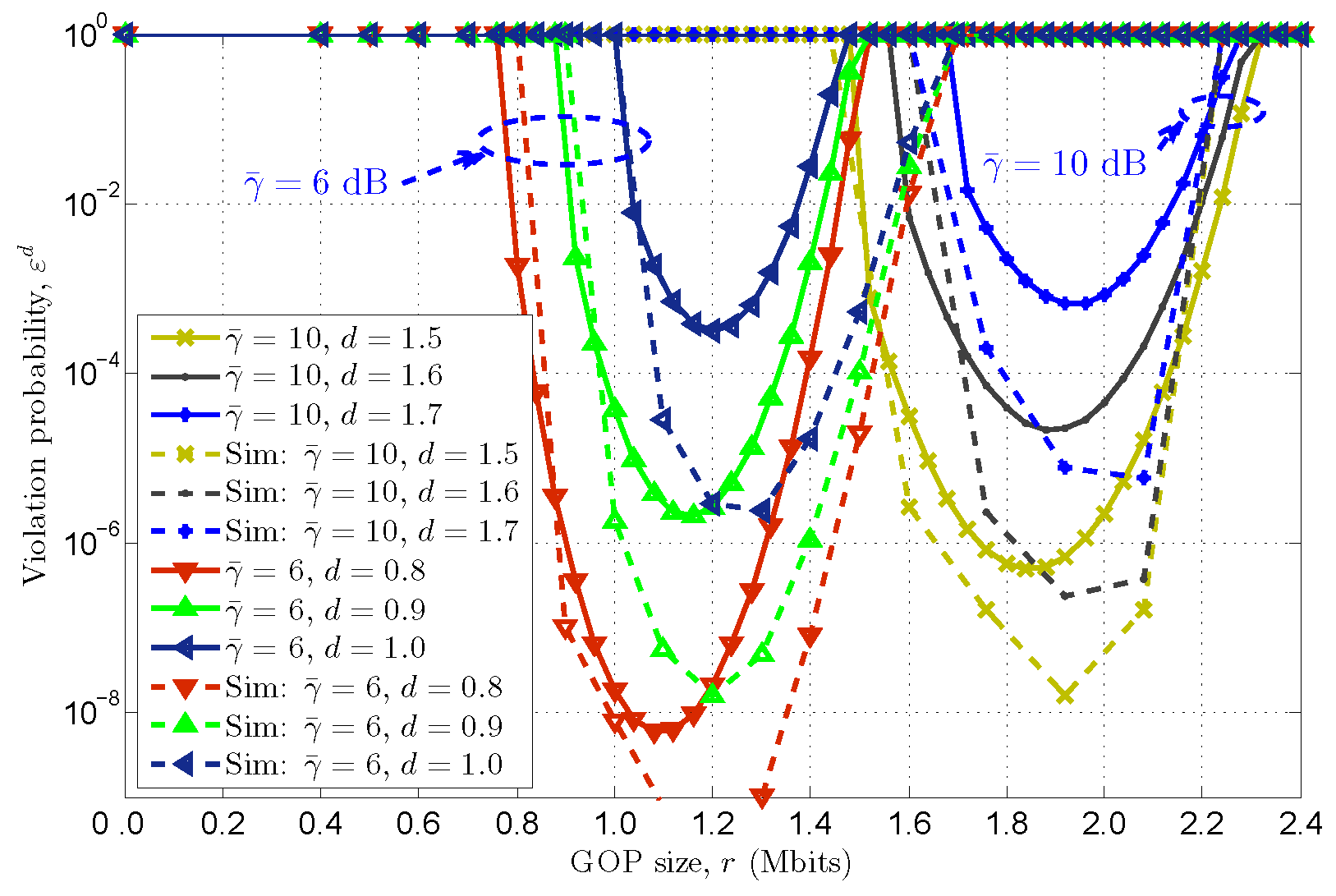}
\caption{Violation probability ($\eps^d$) (computed and simulated) vs. GOP size ($r$) for SVC over multi-hop wireless network (solid line for bounds and dashed line for simulated) for $\bar \gamma = 6, 10$ dB and for different target departure per GOP $d$, with $T_D =450$ ms, $N=3$,  $W=2.2$~MHz    and    $n =2.5$ GOP/s.}
\label{fig:eps-vs-r}
\end{figure}

\begin{figure}
\centering
\includegraphics [width=4in]{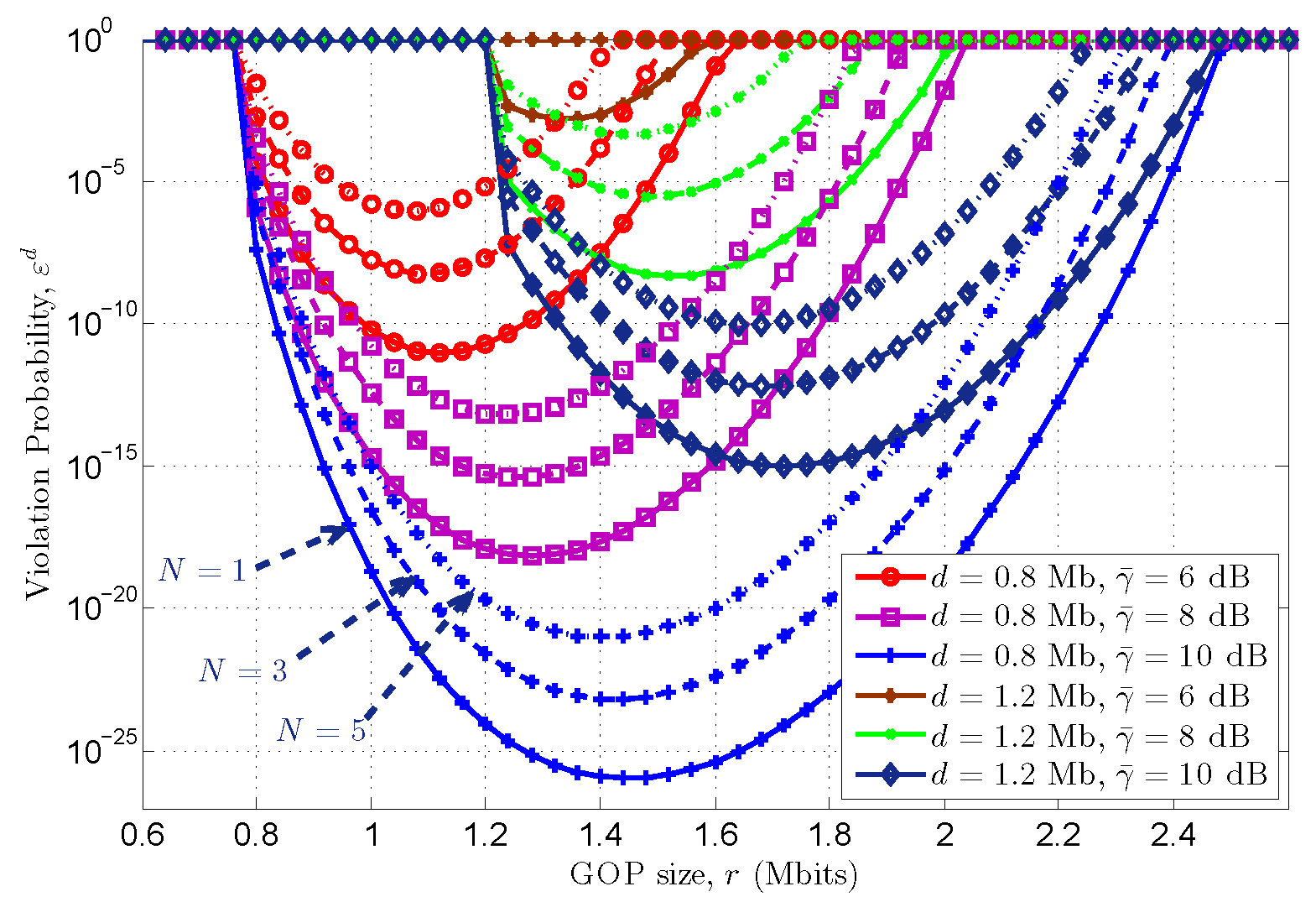}
\caption{Violation probability ($\eps^d$)  vs. GOP size ($r$) for SVC over multi-hop wireless network for $d=0.8, 1.2$ Mb and for different  average SNR per hop ($\bar \gamma =6,8,10$ dB) and $N=1,3, 5$ hop,  $T_D =450$ ms,    $W=2.2$~MHz    and    $n =2.5$ GOP/s.}
\label{fig:eps-vs-r_gamma}
\end{figure}


Fig.~\ref{fig:eps-vs-r} also shows  a rapid increase in the violation probability   when moving away from the optimum frame size. Therefore, the availability of a large number of enhancement layers  is critical for a fine-grained rate adaptation to channel conditions, subject to reliability constraints. This becomes never more critical than  in applications that require reliable video streaming under low playout deadline, e.g., remote surgery, control of unmanned vehicles.

Fig. \ref{fig:eps-vs-r_gamma} summarizes the achievable performance for different expected received frame size values $d$,  SNR  and number of hops. We see that the range of transmitted frame sizes that yield acceptable violation probability depends  on $d$ on one side, and on $\bar{\gamma}$  on the other side. The optimal frame size is determined  by these two parameters, while the number of hops,   $N$, affects significantly the achievable violation probability, but not the optimal value of the frame size.

In order to  examine the efficiency of model-based frame size adaptation, we consider adaptation over a fixed transmission path and cross-layer optimized routing and rate adaptation. We compare the proposed model-based adaptation (MOD) to the optimal adaptation (OPT), where the optimum transmission frame sizes, and the resulting per frame departures are obtained by conducting extensive simulations. On Figures \ref{fig:adapt-N3} and \ref{fig:adapt-N1to3} we show the transmitted and received frame size $r$ and $d$ for OPT. For MOD we show the transmitted frame size that is suggested by the model, the bound on the received frame size, and the actual received frame size where the reliability constraint holds, derived through simulations.



\begin{figure}[ht]
\centering
\includegraphics [width=4in]{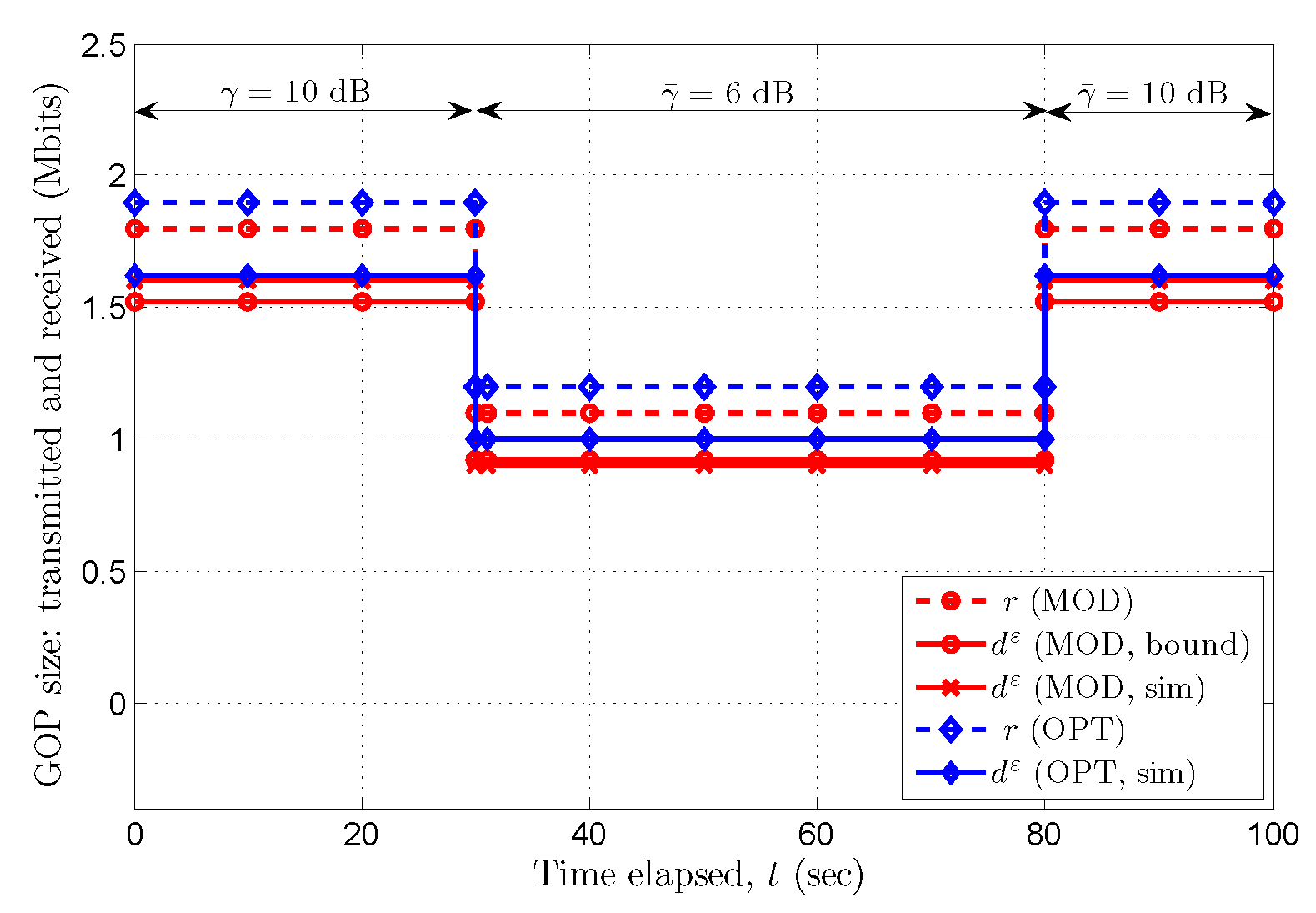}
\caption{Frame size ($r$) adaptation for SVC over 3-hop wireless network for the model-based adaptation (MOD) and for violation probability $\eps = 10^{-5}$ compared to the optimal adaptation (OPT)  when average SNR, $\bar \gamma =10 $ dB then it drops to $\bar \gamma =6 $ dB and get back to $\bar \gamma =10 $ dB again, for   $T_D =450$ ms,    $W=2.2$~MHz    and    $n =2.5$ GOP/s.}
\label{fig:adapt-N3}
\end{figure}

Fig. \ref{fig:adapt-N3} considers fixed routing with $N=3$, and layered coding with 100 kbits layer sizes. We consider a scenario where the SNR $\bar{\gamma}$ changes from 10 dB to 6 dB and back to 10 dB at times $t=30$ seconds and $t=80$ seconds respectively. We use results similar to the ones reported in Fig. \ref{fig:eps-vs-r} to demonstrate the frame size adaptation in time. We assume that both the OPT and the MOD based schemes have stabilized at $t=0$. OPT transmits with a frame size of $r=1.9$ Mbits, and receives a frame size of $d=1.6$ Mbits with violation probability $\eps=10^{-5}$. The model-based scheme slightly underestimates both $r$ and $d$, but due to the layering, it reaches the same actual per frame departures as the OPT solution. After the channel quality degradation, the MOD scheme decreases $r$, maintaining the system stability, again operating slightly below the OPT scheme. These results demonstrate that albeit the proposed network calculus based model provides only a lower bound on the per frame departures under some quality constraints, it enables the determination of  a near optimal transmission frame size as it was suggested by Fig.~\ref{fig:eps-vs-r}.

In a real implementation of the model-based scheme, the channel quality change would be followed by a transient phase, where the average SNR value is gradually updated, leading to a period with lower than optimal performance. The characterization of this transient phase is out of the scope of the paper.


\begin{figure}[ht]
\centering
\includegraphics [width=4in]{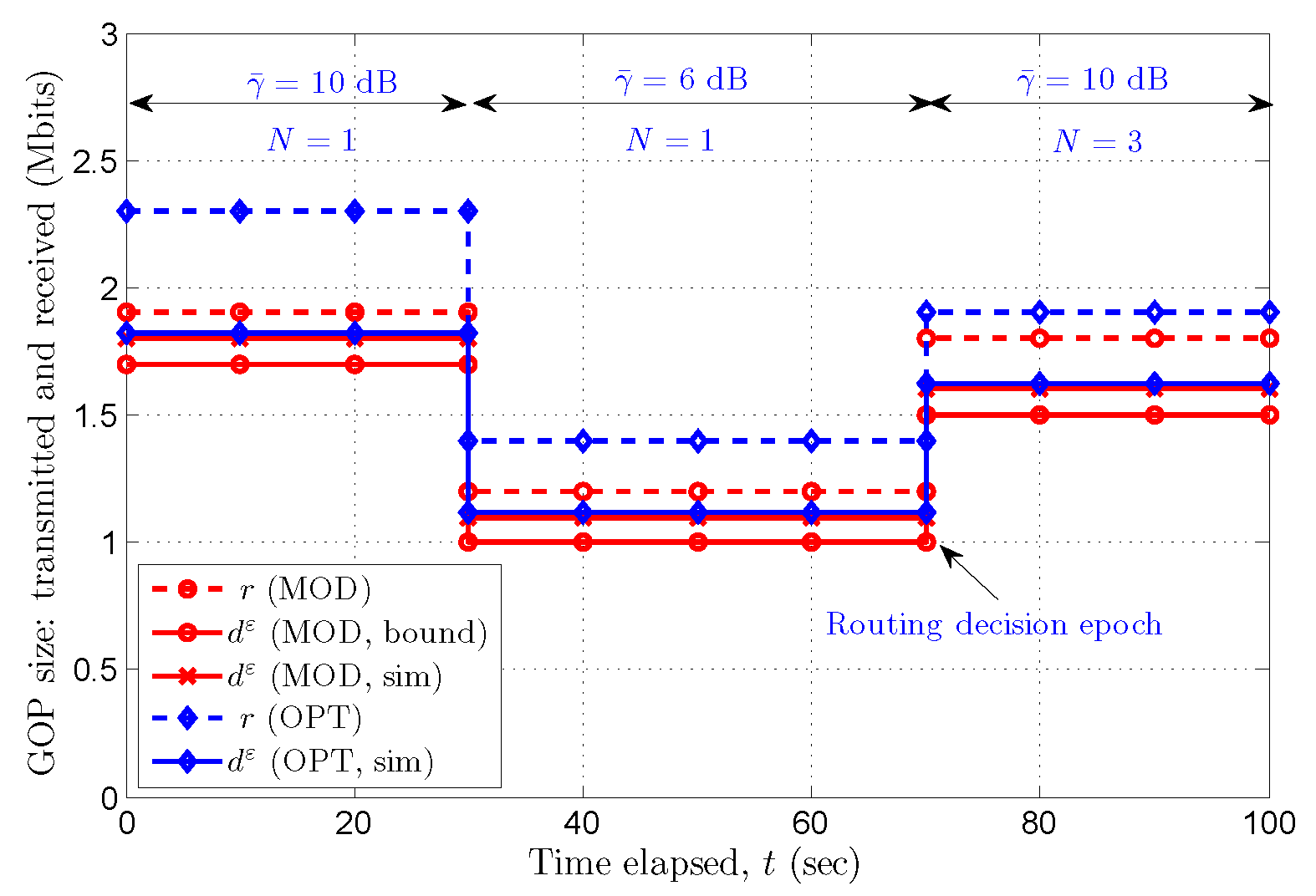}
\caption{Frame size ($r$) adaptation with routing for SVC over  wireless network for the model-based adaptation (MOD)  for violation probability $\eps = 10^{-5}$ compared to the optimal adaptation (OPT)  for a single hop link with  SNR $\bar \gamma =10 $ dB then it drops to $\bar \gamma =6 $ dB while a 3-hop link with $\bar \gamma =10 $ dB per hop exist, for $T_D =450$ ms    and    $n =2.5$ GOP/s.}
\label{fig:adapt-N1to3}
\end{figure}

Finally, Fig. \ref{fig:adapt-N1to3} demonstrates an example of rate adaptation combined with routing. We assume that the source node receives routing information, including the per link SNR values periodically, for example  every 30 seconds as suggested for the  RPL standard \cite{Accettura}. Between routing updates, the source performs rate adaptation based on the SNR feedback on the actual path. We consider the case when the quality of the single hop path deteriorates from $\bar{\gamma}=10$ dB to $\bar{\gamma}=6$  dB at $t=30$ seconds, and new routing information is received at $t=70$ seconds, about an $N=3$ path with 10 dB per link SNR. In this case, the longer path provides better service for the delay constrained transmission, as it has also been shown in Fig.~\ref{fig:d-vs-r}. As a result, the MOD scheme first adapts to the poor channel quality on the single hop path, it then selects the three-hop path, and increases the transmitted frame size according to the better channel conditions. As the reporting of per hop SNR values, or the minimum SNR perceived on a path can be easily accommodated in routing protocols like RPL, routing combined with the model-based rate adaptation provides an excellent approach to ensure reliable, high quality, delay sensitive video transmission in wireless networks.

\section{Conclusion}
\label{sec:conclusions}
In this paper we propose a network-calculus-based rate adaptation for delay-sensitive scalable video transmission over multi-hop wireless transmission paths. We derive new network calculus results that provide a probabilistic lower bound on the received video quality while considering the variability of the wireless channels, the  effect of the queuing delays at the network nodes and the frame-based playout at the receiver. Our evaluation shows that the channel quality has a more significant effect on the playout performance than the number of hops in the traversed path under low and moderate loads. Nonetheless, the effect of the hop count becomes significant as the network load increases. We show that even if the lower-bound-based model underestimates the achievable reliability, the transmission rate suggested by the model is close to the real optimum.
{  Our results also show that the performance degradation due to the layering effect, compared to the  perfect adaptation using the fluid model, depends significantly on the layer size, and hence, the number of enhancement layers per frame. That is, reliable, low latency video streaming over wireless links benefits greatly from adding more layers in layered coding. }

The proposed model provides a tool for low-complexity and fast adaptation of the number of transmitted layers  to  the underlying channel conditions, the playout delay limit and the desired reliability constraints. Our results show that the streaming performance under the model-based rate adaptation is very close to the achievable optimum for various network parameters (within 10\% in the considered numerical examples). This suggests that the proposed network-calculus-based approach is an efficient tool for channel-aware rate control and routing for adaptive layered video transmission under strict playout delay limits.


%

\begin{thebibliography}{99}

\bibitem{5gppp-1}
5G PPP, ``5G and the Factories of the Future,'' \emph{White paper}, Oct. 2015.

\bibitem{Gerla}
M.~Gerla, E.-K. Lee, G.~Pau, and U.~Lee, ``Internet of vehicles: From
  intelligent grid to autonomous cars and vehicular clouds,'' in \emph{Proc. IEEE World Forum on Internet
  of Things (WF-IoT)}, March 2014.

\bibitem{Ghodoussi}
M.~Ghodoussi, S.~Butner, and Y.~Wang, ``Robotic surgery - the transatlantic
  case,'' in \emph{Proc. IEEE International Conference on Robotics and Automation (ICRA)}, May 2002.

\bibitem{5gppp-2}
5G PPP, ``5G Automotive Vision,'' \emph{White paper}, Oct. 2015.


\bibitem{Baroffio}
L. Baroffio, et al., ``Enabling visual analysis in wireless sensor networks,'' in \emph{Proc. IEEE International Conference on Image Processing (ICIP)}, Oct. 2014.


\bibitem{Movassaghi}
S. Movassaghi, et al., ``Wireless Body Area Networks: A Survey,'' \emph{IEEE Communications Surveys \& Tutorials}, vol.16, no.3, pp.1658-1686, 2014.

\bibitem{Nishiyama}
H. Nishiyama, M. Ito, N. Kato, ``Relay-by-smartphone: realizing multihop device-to-device communications,''  \emph{IEEE Communications Magazine}, vol.52, no.4, pp.56-65, April 2014.

%


\bibitem{SchwarzTCSVT07}
H. Schwarz, D. Marpe and T. Wiegand, ``Overview of the Scalable Video Coding Extension of the H.264/AVC Standard,'' \emph{IEEE Transactions on Circuits and Systems for Video Technology,} vol. 17, no. 9, pp. 1103-1120, Sept. 2007.




\bibitem{BoyceTCSVT16}
J. M. Boyce, Y. Ye, J. Chen and A. K. Ramasubramonian, ``Overview of SHVC: Scalable Extensions of the High Efficiency Video Coding Standard,''  \emph{IEEE Transactions on Circuits and Systems for Video Technology,} vol. 26, no. 1, pp. 20-34, Jan. 2016.


\bibitem{RufenachtTIP16}
D. R\"{u}fenacht, R. Mathew and D. Taubman, ``A Novel Motion Field Anchoring Paradigm for Highly Scalable Wavelet-Based Video Coding,'' \emph{IEEE Transactions on Image Processing,} vol. 25, no. 1, pp. 39-52, Jan. 2016.




\bibitem{Spiteri16}
K. Spiteri, R. Urgaonkar, R. K. Sitaraman, ``BOLA: Near-Optimal Bitrate Adaptation for Online Videos,'' in \emph{Proc. IEEE Infocom}, April, 2016.


\bibitem{DeCiccoPV13}
L. De Cicco, V. Caldaralo, V. Palmisano and S. Mascolo, ``ELASTIC: A Client-Side Controller for Dynamic Adaptive Streaming over HTTP (DASH),'' in \emph{Proc. International Packet Video Workshop},  December 2013.

\bibitem{LiJSAC14}
Z. Li, X. Zhu, J. Gahm, R. Pan, H. Hu, A.C. Begen and D. Oran, ``Probe and Adapt: Rate Adaptation for HTTP Video Streaming At Scale,'' \emph{IEEE Journal on Selected Areas in Communications,} vol.32, no.4, pp.719-733, April 2014.


\bibitem{YinSIG15}
X. Yin, A. Jindal, V. Sekar, and B. Sinopoli, ``A Control-Theoretic Approach for Dynamic Adaptive Video Streaming over HTTP,'' \emph{SIGCOMM Comput. Commun. Rev.} vol.45 no.4, pp.325-338, August 2015.


\bibitem{Zubaidy_INFOCOM13}
H.~Al-Zubaidy, J.~Liebeherr, and A.~Burchard.
\newblock ``A (min, x) network calculus for multi-hop fading channels,''
\newblock in {\em Proc. IEEE Infocom}, April 2013.


\bibitem{AlZubaidyTON}
H.~Al-Zubaidy, J.~Liebeherr, and A.~Burchard, ``Network-layer performance
  analysis of multihop fading channels,'' \emph{IEEE/ACM Transactions on Networking}, vol. 24, no. 1, pp. 204--217, February 2016.

\bibitem{Nightingale13}
 J. Nightingale,  Qi Wang, C. Grecos, ``Scalable HEVC (SHVC)-Based video stream adaptation in wireless networks,'' in \emph{Proc. IEEE 24th International Symposium on Personal Indoor and Mobile Radio Communications (PIMRC)}, Sept. 2013.

\bibitem{SchierlCSVT07}
T.~Schierl, T.~Stockhammer, and T.~Wiegand, ``Mobile video transmission using
  scalable video coding,'' \emph{IEEE Transactions on Circuits and Systems for
  Video Technology}, vol.~17, no.~9, pp. 1204--1217, Sept 2007.

\bibitem{ChenICNSC14}
S.~Chen, J.~Yang, E.~Yang, and H.~Xi, ``Receiver-driven adaptive layer
  switching algorithm for scalable video streaming over wireless networks,'' in
  \emph{Proc. IEEE International Conference on Networking, Sensing and Control
  (ICNSC)}, April 2014.



\bibitem{LinCL12}
H.-L. Lin, T.-Y. Wu, and C.-Y. Huang, ``Cross layer adaptation with QoS
  guarantees for wireless scalable video streaming,'' \emph{IEEE Communications
  Letters}, vol.~16, no.~9, pp. 1349--1352, Sept, 2012.


\bibitem{ChenTCSVT15}
S.~Chen, J.~Yang, Y.~Ran, and E.~Yang, ``Adaptive layer switching algorithm
  based on buffer underflow probability for scalable video streaming over
  wireless networks,'' \emph{IEEE Transactions on Circuits and Systems for
  Video Technology}, vol 26, no. 6, pp.1146--1160, June 2016.

\bibitem{YangTMM11}
J.~Yang, H.~Hu, H.~Xi, and L.~Hanzo, ``Online buffer fullness estimation aided
  adaptive media playout for video streaming,'' \emph{IEEE Transactions on
  Multimedia}, vol.~13, no.~5, pp. 1141--1153, Oct. 2011.



\bibitem{RizkNW15}
A.~Rizk and M.~Fidler, ``Queue-aware uplink scheduling: Analysis,
  implementation, and evaluation,'' in \emph{Proc. IFIP Networking Conference}, May
  2015.

\bibitem{SongGLOB11}
W.~Song, ``Delay analysis for compressed video traffic over two-hop wireless
  moving networks,'' in {\em Proc. IEEE Globecom},  Dec 2011.

\bibitem{WuMNA05}
D. Wu and R. Negi, ``Effective Capacity-Based Quality of Service Measures for Wireless Networks,''
\emph{Mobile Networks and Applications,} vol.~11, no.~1, pp.~91--99, February 2006.





%
%
%
%

%
%
%




%
%
%
%

%
%
%
%
%
%
%
%
%
%






%
%
%
\bibitem{SigmetricsCiucu11}
F.~Ciucu,
\newblock ``Non-asymptotic capacity and delay analysis of mobile wireless
  networks,''
\newblock in {\em Proc. ACM Sigmetrics}, June 2011.
%

\bibitem{Fidler-Fading}
M.~Fidler,
\newblock ``A network calculus approach to probabilistic quality of service
  analysis of fading channels,''
\newblock in {\em Proc. IEEE Globecom}, Nov. 2006.


\bibitem{Mahmood_Rizk_Jiang}
K.~Mahmood, A.~Rizk, and Y.~Jiang,
\newblock ``On the flow-level delay of a spatial multiplexing {MIMO} wireless  channel,''
\newblock in {\em Proc. IEEE ICC}, June 2011.

\bibitem{Verticale:2009}
G.~Verticale and P.~Giacomazzi,
\newblock ``An analytical expression for service curves of fading channels,''
\newblock in {\em Proc. IEEE Globecom}, Nov. 2009.

%
\bibitem{FidlerMGF}
M.~Fidler,
\newblock ``An end-to-end probabilistic network calculus with moment generating  functions,''
\newblock in {\em Proc. IEEE IWQoS}, June 2006.





\bibitem{CSChang}
C.-S. Chang.
\newblock {\em Performance guarantees in communication networks}.
\newblock Springer Verlag, 2000.

\bibitem{Jiang-Book}
Y.~Jiang and Y.~Liu.
\newblock {\em Stochastic network calculus}.
\newblock Springer, 2008.








%
\bibitem{Davies}
B.~Davies,
\newblock {\em Integral transforms and their applications},
\newblock Springer-Verlag, NY, 1978.

\bibitem{McEliece}
   R.~McEliece and W.E.~Stark.
  \newblock ``Channels with block interference,''
  \newblock{\em IEEE Transactions on Information Theory},  vol. 30, no. 1, pp.44--53, Jan 1984.

\bibitem{Petreska_2015}
N.~Petreska, H.~Zubaidy, R.~Knorr, and J.~Gross, ``On the recursive nature of
  end-to-end delay bound for heterogeneous wireless networks,'' in \emph{Proc.
  {IEEE} {ICC}}, June 2015.

\bibitem{Petreska-Arxiv}
N.~Petreska, H.~Zubaidy, R.~Knorr, and J.~Gross, ``Power-minimization under statistical delay constraints for multi-hop wireless industrial networks,'' {arXiv:1608.02191v2 [cs.PF]}, Aug. 2016.

\bibitem{Halperin}
D. Halperin, W. Hu, A. Sheth, and D. Wetherall, ``Tool release: gathering 802.11n traces with channel state information,'' \emph{{SIGCOMM} Comput. Commun. Rev.} vol.41, no.1 Jan. 2011.

\bibitem{LTE-book}
E. Dahlman, S. Parkvall and J. Sköld, ``4G LTE/LTE-Advaned for Mobile Broadband.'' Academic Press, 2011.


\bibitem{Accettura}
N.~Accettura, L.~Grieco, G.~Boggia, and P.~Camarda, ``Performance analysis of
  the RPL routing protocol,'' in \emph{Proc. IEEE
  International Conference on  Mechatronics (ICM)}, April 2011.








%
%
%
%
%
%
%
%
%
%
%
%
%

%
%
%
%
%
%



%
%

%









%
%


%
%
%

\end{thebibliography}

\end{document}